\documentclass[letterpaper,twocolumn,10pt]{article}
\usepackage{usenix}

\usepackage{tikz}
\usepackage{amsmath}

\usepackage{filecontents}
\usepackage{microtype}
\usepackage{pifont}
\usepackage{tcolorbox}
\usepackage{multicol}
\usepackage{multirow}

\usepackage{bbm}
\usepackage{comment}
\usepackage{graphicx}
\usepackage{subfigure}
\usepackage{url}
\usepackage{amsmath}
\usepackage{caption}
\usepackage{mathtools}
\usepackage{enumitem}
\usepackage{booktabs}
\usepackage{amsmath}
\usepackage{tabularray}
\usepackage{tabularx}
\usepackage{tablefootnote}
\UseTblrLibrary{booktabs}

\usepackage{algorithmic}
\usepackage{float}
\usepackage{algorithm}
\usepackage{balance}
\usepackage{bbding}
\usepackage{pifont}
\usepackage{framed}
\usepackage{epigraph}
\usepackage{csquotes}
\usepackage[T1]{fontenc}
\usepackage{tablefootnote}
\usepackage{soul}
\usepackage{amsfonts}
\usepackage[dvipsnames]{xcolor}
\usepackage{booktabs}
\usepackage{multicol}
\usepackage{multirow}
\usepackage{threeparttable}
\usepackage[subtle]{savetrees}
\usepackage{wasysym}

\newcommand{\kw}[1]{\texttt{#1}}

\newcommand{\inlinedsection}[2][3pt]{\vspace{#1}\noindent\textbf{#2}.}

\newcommand{\inlinedsectionbf}[2][3pt]{\vspace{#1}\noindent\textbf{#2}:} 
\newcommand{\inlinedsectionbff}[2][3pt]{\vspace{#1}\noindent\textbf{#2}} 
\newcommand{\proj}{{\textsc{Chameleon\-Disc}}}

\usepackage{tikz}

\usepackage{enumitem}
\usepackage{amsthm}
\newtheorem{proposition}{Proposition}

\begin{document}

\setlength{\abovedisplayskip}{3pt}
\setlength{\belowdisplayskip}{3pt}
\setlength{\abovedisplayshortskip}{2pt}
\setlength{\belowdisplayshortskip}{2pt}
%-------------------------------------------------------------------------------

% make title bold and 14 pt font (Latex default is non-bold, 16 pt)
\title{\Large \bf Rethinking Software-Defined Networking Link Discovery \\ with Dynamic Randomization}

%for single author (just remove % characters)
\author{
{\rm Mingming Chen}\\
University of Texas at Dallas
\and
{\rm Teryl Taylor}\\
IBM Research
% copy the following lines to add more authors
\and
{\rm Frederico Araujo}\\
IBM Research
\and
{\rm Benjamin E. Ujcich}\\
Georgetown University
\and
{\rm Thomas La Porta}\\
The Pennsylvania State University
\and
{\rm Trent Jaeger}\\
University of California, Riverside
} % end author

\maketitle

%-------------------------------------------------------------------------------
\begin{abstract}
Software-defined networking (SDN) separates the control and data planes, enabling programmable, centralized network management. A core SDN service is topology discovery --- a periodic process that identifies network links. However, our analysis of five open-source SDN controllers and ten discovery protocols shows that all remain vulnerable to at least one form of link-fabrication attack. The common root cause is their reliance on traditional Link Layer Discovery Protocol (LLDP) packets, whose static identifiers expose their discovery purpose and attract adversaries to exploit them.

We propose \proj{}, a dynamic link-discovery protocol that simultaneously prevents and detects topology-poisoning attacks by eliminating this root cause. Our key insight is that an SDN controller can infer topology without embedding meaningful information in discovery packets. 
\proj{} removes targetable static LLDP signatures and employs a moving-target defense that combines decoy, obfuscation, and camouflage techniques instantiated dynamically at runtime to prevent topology poisoning and detect manipulation. We implement \proj{} on OpenDaylight and demonstrate effectiveness against all identifier-based topology-poisoning attacks. On a 252-link topology, median convergence is 1.37--4.53\,s for legitimate link changes and 4.06\,s for malicious relay detection, with 13.0 percentage points of additional mean CPU and negligible retained-heap difference. Across topologies up to 816 links, \proj{} provides a tunable security--performance trade-off; expanding discovery intervals reduces CPU and mapping-state at the cost of increased attack-detection latency.
\end{abstract}

\section{Introduction}\label{sec:introduction}

Software-defined networking (SDN) decouples the control plane from the data plane, enabling centralized, programmable network management. This flexibility supports fine-grained control and global optimization across diverse environments, including network traffic sampling~\cite{chen2025efficient}, %Internet of Things~\cite{farris2018survey}, 
cloud computing~\cite{yan2015software}, 
data centers~\cite{wang2017efficient}, and vehicular networks~\cite{garg2019sdn}.

SDN centralization enables several services, including a network topology service that discovers the ground-truth layout of the network. Topology discovery is a critical function in both traditional networks and SDN, as it provides essential real-time information for efficient traffic routing. In traditional networks, this is typically accomplished using the standard Link Layer Discovery Protocol (LLDP)~\cite{lldp}.
However, there is no standardized topology discovery protocol for SDN. As a result, many SDN implementations repurpose LLDP for convenience~\cite{onos_disc,odl_openflowplugin}, often without accounting for the unique security implications of SDN. 
Most open-source SDN controllers adopt the de facto OpenFlow Discovery Protocol (OFDP), which uses LLDP packets---or, in the case of hybrid networks that include both SDN and traditional switches, combines them with the Broadcast Domain Discovery Protocol (BDDP)~\cite{onos_disc, floodlight}---to discover links. In addition, nine SDN discovery protocols derived from OFDP also depend on LLDP~\cite{chen2024manipulating}.

There has been a growing number of recent attacks targeting SDN discovery protocols across diverse attack surfaces. On the control plane, adversaries exploit real vulnerabilities---e.g., CVE-2024-46942~\cite{cve202446942} and CVE-2024-46943~\cite{cve202446943}---to inject malicious flow entries, enabling flow entry-induced, globalized topology poisoning attacks~\cite{chen2024manipulating} (e.g., CVE-2024-37018~\cite{cve202437018}). On the data plane, malicious hosts or switches exploit vulnerabilities~\cite{thimmaraju2018taking} such as CVE-2016-2074~\cite{cve20162074} and CVE-2016-10377~\cite{cve201610377} to intercept and relay LLDP packets, enabling localized topology poisoning attacks~\cite{hong2015poisoning,skowyra2018effective,alimohammadifar2018stealthy,marin2019depth, dhawan2015sphinx, jero2017identifier}. These CVEs demonstrate that such vulnerabilities are real and actively exploited, yet the continued reliance on LLDP leaves existing discovery mechanisms exposed.

Applying traditional LLDP to SDN is problematic because the discovery process in a traditional network relies on each switch forwarding source information to its neighbors using LLDP packets—a model designed for decentralized architectures. However, this assumption is unnecessary in SDN, where a centralized controller manages discovery and can utilize entirely different strategies. While LLDP is effective at discovering links in SDN, its standardized packet type and format unintentionally expose its link discovery purpose, making it vulnerable to link fabrication attacks (e.g., ~\cite{chen2024manipulating,hong2015poisoning,skowyra2018effective,alimohammadifar2018stealthy,nehra2019sldp}).

Designing a lightweight and secure SDN link discovery protocol to prevent a broad range of topology poisoning attacks remains a significant challenge. The decoupling of the control and data planes in SDN introduces vulnerabilities to information manipulation, exposing numerous attack vectors. Existing countermeasures~\cite{hong2015poisoning,skowyra2018effective,baidya2020link,smyth2017detecting,huang2020towards} typically rely on known attack signatures and detect attacks only after they occur---often with substantial monitoring overhead~\cite{dhawan2015sphinx,alimohammadifar2018stealthy}. Among the ten SDN link discovery protocols we studied, only 3 incorporate security enhancements~\cite{nehra2019sldp, nehra2019tilak,azzouni2018softdp}. However, these either fail to prevent relay-based and 
newly discovered 
flow entry-induced topology poisoning attacks~\cite{alimohammadifar2018stealthy, chen2024manipulating}, or they are not fully compatible with SDN, as they require modifications to SDN switches.
While cryptographic mechanisms can ensure message integrity, they do not prevent all forms of topology manipulation. Attacks such as relay-based or flow entry-induced topology poisoning do not alter packet content or fabricate discovery packets; instead, they exploit forwarding behavior. As a result, even with strong integrity guarantees, adversaries can still mislead the controller about the actual network topology.

To address these challenges, we introduce \proj{}, a lightweight SDN link discovery protocol fully compatible with existing SDN architectures that, for the first time, prevents and detects all known topology-poisoning attacks exploiting static identifiers or discovery signatures in discovery packets~\cite{hong2015poisoning,baidya2020link,nehra2019sldp,alimohammadifar2018stealthy,chen2024manipulating}.
Our key insight is that \emph{an SDN controller can infer topology without embedding meaningful information in discovery packets that can be abused by the attacker.} 
\proj{} provides robust protection against a wide range of threats originating from diverse attack surfaces via orchestrating a three-role discovery ensemble: (1) DecoyType packets (LLDP/BDDP) lure adversaries targeting traditional discovery traffic; (2) MorphType packets randomize all fields, obfuscating discovery intent and blocking attacks that exploit static identifiers; and (3) CamoType (ARP/IP) packets blend with normal traffic to validate genuine connectivity and expose malicious switches capable of advanced packet analysis and manipulation. Crucially, our camouflage layer creates an inescapable dilemma for attackers: an attacker who manipulates normal data-plane traffic exposes themselves through network disruption, while attackers who avoid attacking the data plane inadvertently render their topology manipulation detectable. By cross-checking link inferences from these heterogeneous probes through multiple validations, the controller distinguishes legitimate link changes from fabricated ones, attributes inconsistencies to the responsible host, application, or switch, and updates topology only after consecutive, consistent confirmations. 

This strategy prevents adversaries from using the packets’ static identifier or discovery signature to persistently fabricate false links, effectively mitigating multiple attack vectors. More importantly, the dynamic obfuscated discovery packets preserve correct SDN topology discovery: because the controller initiates the discovery packets and receives them after a one-link traversal, it can accurately recover the link source using a pre-stored mapping between the dynamically randomized packet and its true origin, and retrieve the link destination from metadata provided by the reporting switch.

Implementing \proj{} is challenging for several reasons. First, constructing the discovery packet and selecting its type is critical to ensuring one-link traversal. 
Achieving this is non-trivial across different packet types. For obfuscated MorphType packets and camouflage packets, we must ensure that they exclusively match the lowest-priority \textit{table-miss} entries on the corresponding switches, without being intercepted by higher-priority flow entries.
Second, using camouflage ARP packets for verification requires avoiding interference with controller services (e.g., ARP Handler and Host Tracker). \proj{} therefore constructs CamoType probes using fresh, non-conflicting MAC/IP identities that mimic legitimate new-host arrivals and adds filters to ARP-related network applications.
Third, generating randomized values for each field of the discovery packets in a time- and space-efficient manner requires careful consideration.  
Finally, realizing built-in detection without additional real-time monitoring is challenging, as it requires integrating prevention and detection within the lightweight discovery process.

Our approach recognizes that link discovery in SDN environments differs fundamentally from traditional network architectures.
\proj{} ensures compatibility with trending SDN southbound protocols (e.g., OpenFlow and P4-runtime), 
eliminating the need for additional monitoring systems or modifications to standard SDN switches.
By comparison, existing defenses fall short in addressing all attack vectors. They require significant architectural modifications and add significant complexity caused by real-time monitoring or logic trees, resulting in higher resource consumption and reduced system efficiency. By focusing on the protocol's inherent vulnerabilities, our approach overcomes these limitations. 

We summarize our contribution as follows:
\begin{itemize}[leftmargin=15pt, itemsep=0pt, parsep=0pt, topsep=2pt, partopsep=2pt]

    \item We propose \proj{}, a dynamically randomized link discovery protocol that prevents and detects all known topology-poisoning attacks exploiting LLDP-based static identifiers, while remaining lightweight and fully compatible with existing SDN architectures. 
    \item We present a theoretical analysis of \proj{}'s resilience under adversarial conditions. With three-round validation, MorphType alone yields an expected random-guessing attack time of approximately 44 years, while CamoType provides an additional verification layer against adversaries that evade randomized discovery. 
    \item We implement \proj{} on OpenDaylight and evaluate its security, convergence, and performance. \proj{} defeats all four identifier-based topology-poisoning attacks as well as an advanced relay attack. On a 252-link topology, median convergence is 1.37--4.53\,s for legitimate link changes and 4.06\,s for malicious relay detection. Across topologies up to 816 links, \proj{} maintains bounded mapping state, and its discovery interval provides an explicit trade-off between detection latency and controller overhead.
\end{itemize}

\section{Motivation and Background}\label{sec:overview}

\begin{figure*}[t]
\centering
\subfigure[OFDP discovers a real link $A2\rightarrow1B$]{\label{fig:link_discovery} 
\includegraphics[width=0.32\textwidth]
{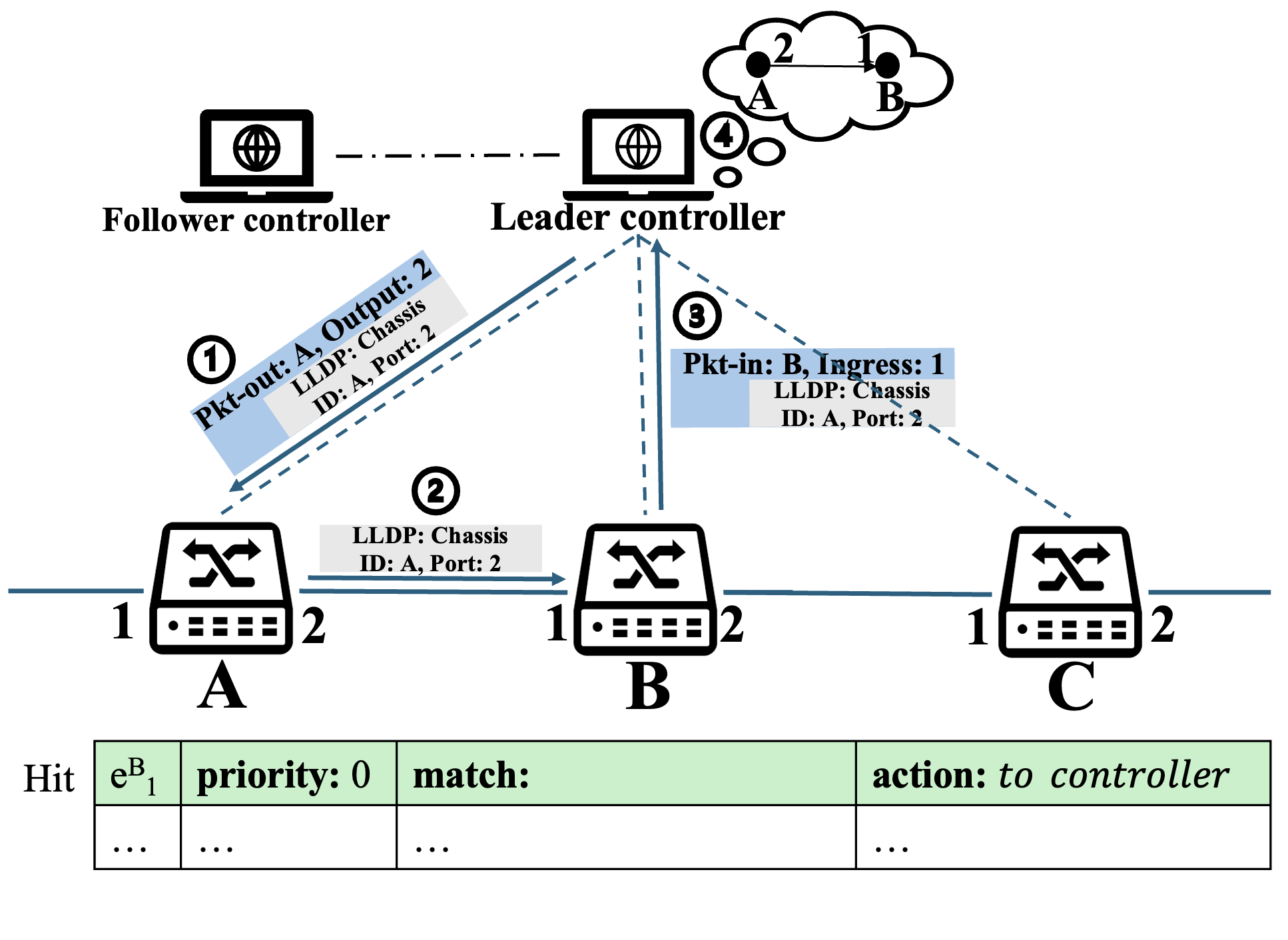}}
\hfill
\subfigure[Marionette\cite{chen2024manipulating} fabricates a link $A1\rightarrow1C$ ]{\label{fig:link_fab} \includegraphics[width=0.32\textwidth]{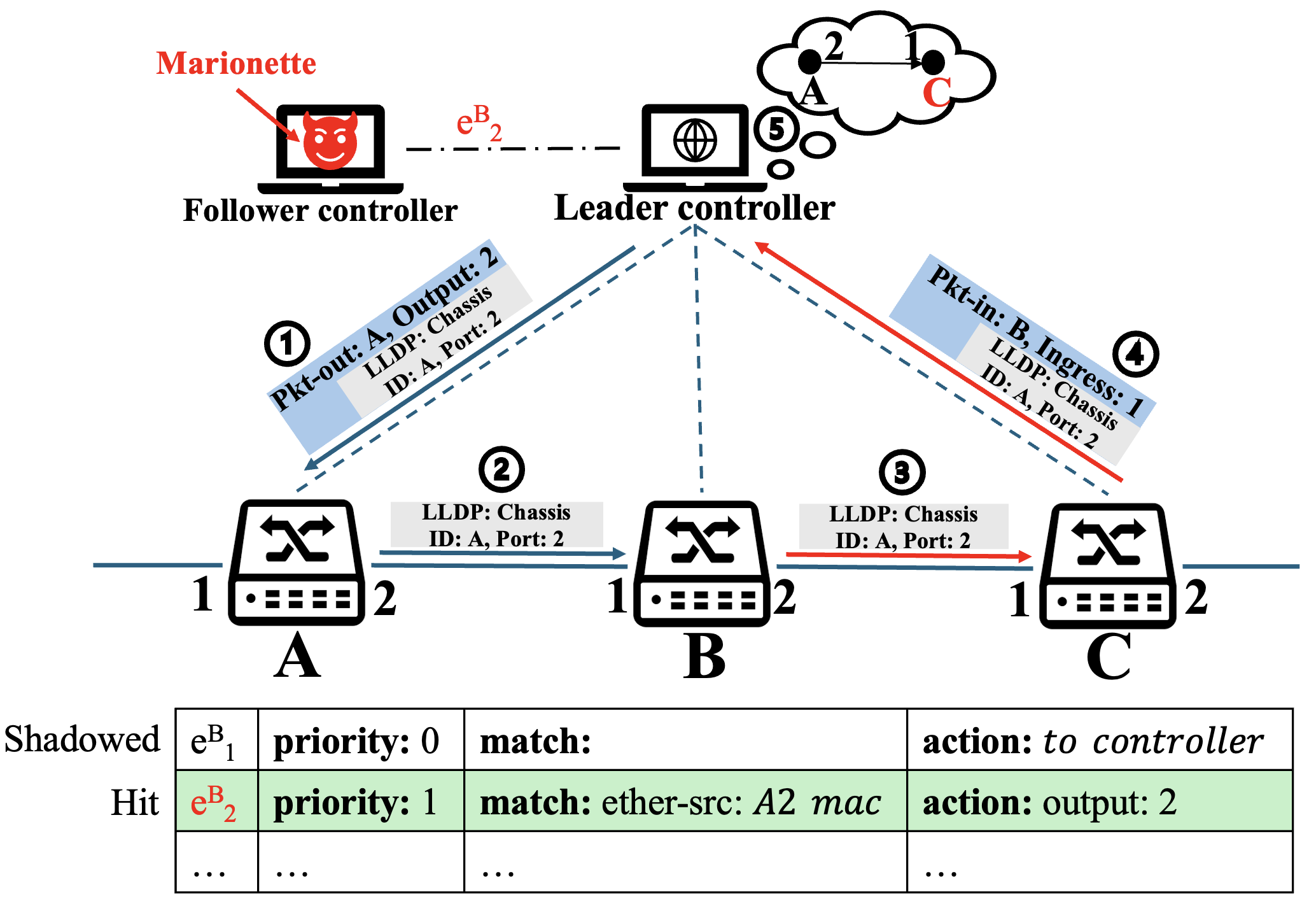}}
\hfill
\subfigure[\proj{} prevents Marionette]{\label{fig:morphDisc} \includegraphics[width=0.32\textwidth]{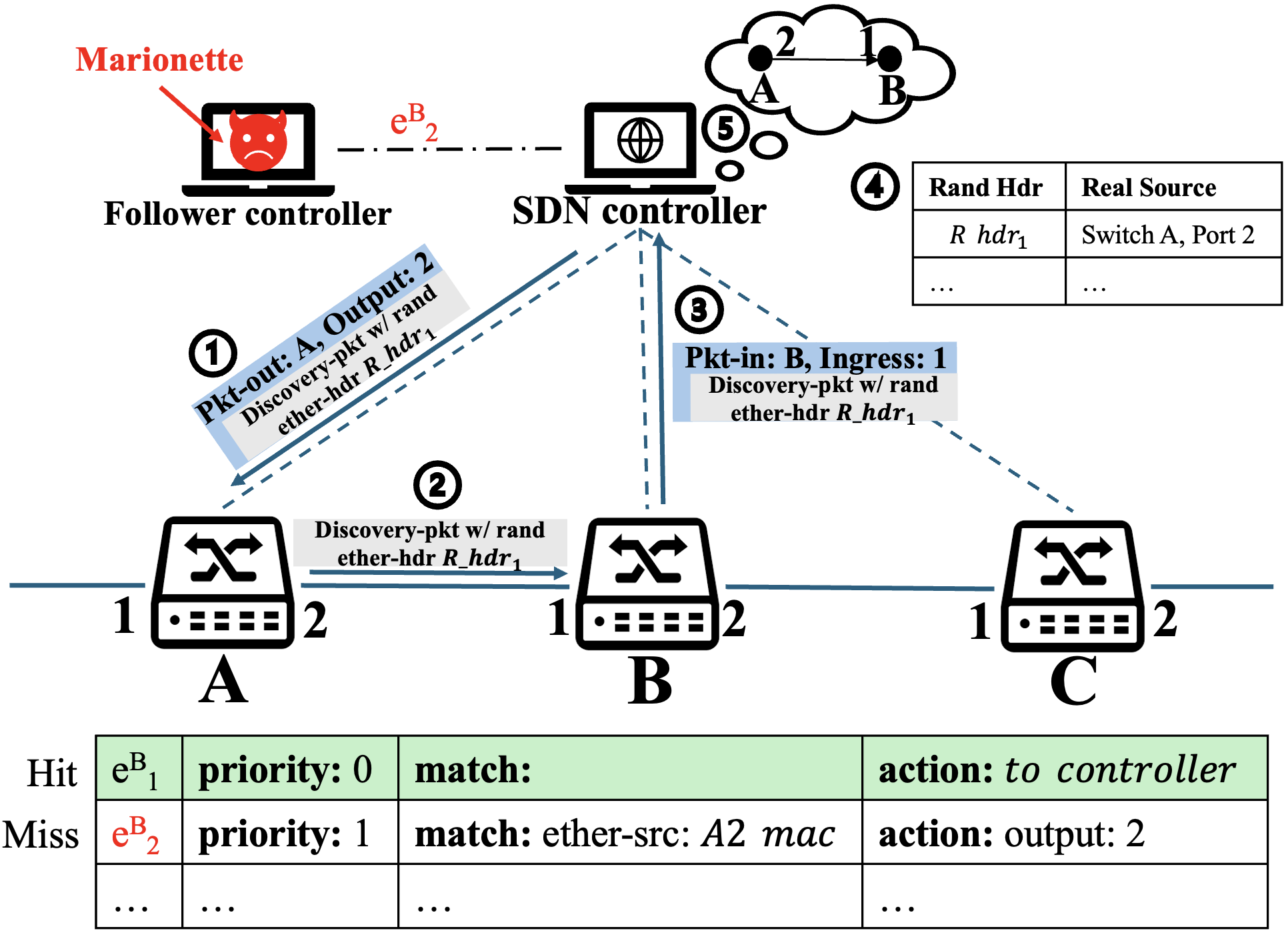}}
\vspace{-5pt}
\caption{Motivating Example for \proj{}}
\vspace{-5pt}
\label{fig:motivating_example}
\end{figure*}
%In an SDN architecture, 
OpenFlow is a standard SDN southbound protocol enabling a centralized controller to communicate directly with 
%OpenFlow 
switches. Once the controller establishes a connection with the switches, it gathers essential switch hardware information and data statistics. The central controller periodically discovers the network topology as the basis for instructing switches on how to forward traffic. The controller can \emph{proactively} install forwarding rules to all the connected switches---an approach known as proactive forwarding. SDN also supports a key feature called \emph{reactive} forwarding: if an arriving packet does not match any existing flow entry that can forward it normally in the data plane, the \textit{table-miss} flow entry sends this packet to the controller in a \kw{packet-in} message. 

The de-facto OpenFlow Discovery Protocol (OFDP) uses \kw{packet-out} and \kw{packet-in} messages to carry LLDP packets for link discovery. %Specifically, the controller constructs LLDP packets for each switch port and sends them via \kw{packet-out}. It also pre-installs corresponding flow entries (e.g., \textit{table-miss}) on all switches to ensure that, after traversing a link, the LLDP packet hits on that flow entry, encapsulated in a \kw{packet-in} message, and sent back to the controller. The controller extracts the link source information from the LLDP packet and parses the link destination information from the \kw{packet-in} message.
%A significant advantage of OFDP is that it requires no changes to OpenFlow switches; all logic resides in the SDN controller. 
%
However, because OFDP relies on static LLDP packets, it introduces vulnerabilities in the link discovery process, as programmable forwarding can be exploited.  %We demonstrate this problem in the example below, variance among OFDP implementations, and current defenses and their limitations.

\subsection{Motivating Example}\label{sec:motivation}

In Figure~\ref{fig:link_discovery}, we show how OFDP is used for link discovery normally.  A controller composes an LLDP packet for Port 2 on Switch A (denoted $LLDP_{A2}$) to detect the link originating from that port. In Step \ding{192}, the controller issues Switch A a \kw{packet-out} message containing $LLDP_{A2}$ with \kw{action} of instructing it to forward $LLDP_{A2}$ via Port 2. The $LLDP_{A2}$ packet has an \kw{ether-src} value that matches the MAC address of Port 2 on Switch A. In Step \ding{193}, the neighboring Switch B receives $LLDP_{A2}$ on its Port 1. In Step \ding{194}, finding no flow entry match, Switch B’s table-miss flow entry instructs the switch to forward the  $LLDP_{A2}$ packet back to the controller within \kw{packet-in} message. In Step \ding{195}, upon receiving this \kw{packet-in} message, the controller knows the communication originally emanated from Switch A’s Port 2 (by parsing $LLDP_{A2}$) and has now returned from Switch B’s Port 1 (by reading \kw{ingress} field of \kw{packet-in}), thus concluding a valid link $A2-1B$.

Figure~\ref{fig:link_fab} shows how a recent flow entry-induced poisoning attack, called Marionette~\cite{chen2024manipulating}, can subvert this normal discovery process to fabricate links. If Marionette resides in a follower controller or an SDN application, it can install a malicious, higher-priority flow entry $e_2^B$ on Switch B. After Step \ding{193}, because $LLDP_{A2}$ has $A2\_mac$ as \kw{ether-src}, it matches $e_2^B$ instead of the table-miss entry. As a result, Switch B forwards $LLDP_{A2}$ out of Port 2 instead of returning it to the controller at Step \ding{194}. Switch C, connected at that port, treats $LLDP_{A2}$ as a table-miss packet and sends it back to the controller as Step \ding{195}. At Step \ding{196}, the controller mistakenly believes the packet traveled from Switch A’s Port 2 to Switch C’s Port 1, thereby fabricating a non-existent link $A2-1C$.

The core of topology poisoning attacks lies in misleading the controller into inferring incorrect link source (\textit{link-src}) or link destination (\textit{link-dst}) information. Since the \textit{link-src} is extracted directly from the LLDP packet, fabricating it typically involves spoofing an LLDP packet with falsified source information. While such attacks can be mitigated by validating the integrity of LLDP packets,  fabricating \textit{link-dst} is more %complex and 
challenging to detect. The \textit{link-dst} is determined by identifying the switch that returns the LLDP packets to the controller. Any method that causes LLDP packets to traverse multiple links can also lead to an incorrect \textit{link-dst} inference. %\footnote{One variant involves a malicious switch forging ingress metadata in the \kw{packet-in} header~\cite{baidya2020link}, but this is inherently mitigated by our discovery packet obfuscation, like other \kw{link-dst} fabrication attacks.}. 
The example in Figure~\ref{fig:link_fab} demonstrates how a malicious application or backup controller can manipulate LLDP packet forwarding by injecting poisonous flow entries. Similarly, a malicious switch (e.g., Switch B) or hosts can intercept LLDP packets, filter them out, and relay or replay them to another switch (e.g., Switch C) to cause incorrect \textit{link-dst} (e.g. $A2\rightarrow1C$).  Our proposed defense to mitigate Marionette and other attacks, called \proj{}, is shown in Figure~\ref{fig:morphDisc}, which we discuss in Section~\ref{sec:insights}.

The \textbf{root cause} of topology poisoning attacks that rely on static identifiers is as follows: 

\begin{tcolorbox}[colback=cyan!5, colframe=cyan!80!black,left=3pt,right=3pt,top=3pt,bottom=3pt]
The static fields of LLDP packets (e.g., the LLDP \kw{ether-type}, which explicitly signals the discovery intent) are an easy target for attackers, who can spoof or exploit these fixed identifiers to manipulate packet forwarding via various attack surfaces. 
\end{tcolorbox}
\begin{table*}[!h]
\small

\caption{Effectiveness of State-of-the-Art Countermeasures against Topology Poisoning Attacks}
\label{tab:link_fab_contermeasure}
\vspace{-5pt}
%\vspace{-0.5\baselineskip}
\footnotesize
\setlength{\tabcolsep}{2.25pt}
\centering 
\begin{tabular}[htbp]{l|l|lllll|llllll|llll}
\toprule
\multicolumn{2}{l|}{\textbf{Topo. Poisoning Attacks}}
& \multicolumn{5}{l|}{\textbf{Open-Source Controllers}}
 & \multicolumn{6}{l|}{\textbf{Topology Poisoning Detections}}  
 & \multicolumn{4}{l}{\textbf{Secure SDN Discovery Protocols}}\\
 \midrule
\textbf{Cause} & \textbf{Attack Surface} 
& 
\textbf{Ryu} & \textbf{Pox} & \textbf{Flood} & \textbf{ODL} & \textbf{ONOS}
&
\textbf{Authen.} & \textbf{Port-} & \textbf{Topo} & \textbf{Latency}  & \textbf{Sphinx} & \textbf{SPV}

& 
\textbf{sOFTDP} &\textbf{SLDP} &\textbf{TILAK} &\textbf{\small\textsc{Chameleon}} 
\\

&  & 

& &  \textbf{light}&  &  &
\textbf{-Based}& \textbf{Based} & \textbf{Guard} & \textbf{-Based}  &  & 
&  & & & \textbf{\small\textsc{Disc}}
\\
\midrule

Spoof &  Mal-Host~\cite{hong2015poisoning} & 
\ding{55} & \ding{55} & \ding{51} & \ding{51} & \ding{51} & 
\ding{51} & \ding{51} & \ding{51} & \ding{51} & \ding{51} & \ding{51} & 
\ding{51} & \ding{51} & \ding{51} & \ding{51}
 \\
%\midrule
& Mal-Switch~\cite{baidya2020link} &  
\ding{55} & \ding{55} & \ding{55}  & \ding{55} & \ding{55} &
\ding{55} & \ding{51} & \ding{55} & \ding{51}   & \ding{51} & \ding{51} &
\ding{55}  & \ding{55} & \ding{55} & \ding{51} 
 \\
\midrule

Replay &  Mal-Host~\cite{nehra2019sldp} & 
\ding{55} & \ding{55} & \ding{55} & \ding{51} & \ding{51} & 
dyn. \ding{51} & \ding{55} & \ding{51} & \ding{51} & \ding{51} & \ding{51} &
\ding{51} & \ding{51} & \ding{51} & \ding{51} 
 \\
 &  & 
 &  &  & & & 
stat. \ding{55} & & &  &  &  &
 &  &  & 
 \\
%& Mal-Switch &
%\ding{55} & \ding{55} & \ding{55} & \ding{51} & \ding{51} & 
%dy. \ding{51} stat. \ding{55} & \ding{51} & \ding{51} & \ding{51} & \ding{51} & \ding{51} &
%\ding{55} & \ding{51} & \ding{51} & \ding{51} 
% \\
\midrule

Relay 
%& Mal-Host &
%\ding{55} & \ding{55} & \ding{55} & \ding{55} & \ding{55} & 
%\ding{55} & \ding{51} & \ding{51} & \ding{51} & \ding{51} & \ding{51} &
%\ding{51} & \ding{51} & \ding{51} & \ding{51} 
% \\
& Mal-Switch~\cite{alimohammadifar2018stealthy} &
\ding{55} & \ding{55} & \ding{55} & \ding{55} & \ding{55} & 
\ding{55} & \ding{55} & \ding{55} & \ding{55} & \ding{51} & \ding{51} &
\ding{55} & \ding{55} & \ding{55} & \ding{51} 
 \\
\midrule

Flow  & Mal-App~\cite{chen2024manipulating} &
\ding{55} & \ding{55} & \ding{55} & \ding{55}  & \ding{55}  & 
\ding{55} & \ding{55} & \ding{55} & \ding{55}  & \ding{55}  & \ding{55} &
\ding{55} & \ding{55} & \ding{55} & \ding{51} 
 \\
Entry & Mal-Ctrl Peer~\cite{chen2024manipulating} &
\ding{55}  & \ding{55} & \ding{55} & \ding{55} & \ding{55}  &
\ding{55} & \ding{55} & \ding{55} & \ding{55}  & \ding{55}  & \ding{55} &
\ding{55} & \ding{55} & \ding{55}  & \ding{51} 
 \\
\toprule
%\multicolumn{2}{l|}{\textbf{Security Overhead}} 
%& Calculate & Monitor   & Monitor  & Cal. \& Authen.  & Mon. OF & Mon.  &  NA  & NA  & Calculate & C.\&A.  & C.\&A.  & Modify  & Rand.  & Rand.  & Rand. Disc. \\

%&  & \& Authen.  & Port &Latency & Sig. \& Mon.  &\& Data & Pkt-In  & & & \& Authen.  & Dy. & Dy. & Switch  & \kw{ether} & \kw{ether} & Pkt \& Table \\

%& & Signature & Status & &  Port Status & Plane & \&Rules & & & Static Sig. & Sig. &Sig. & \& Ctrl. & \kw{-src} & \kw{-dst} &  Lookup \\
%\toprule
\end{tabular}
\vspace{-10pt}
\end{table*}

We define four types of topology poisoning attacks below:

\inlinedsectionbf{Spoof-Based Topology Poisoning} A malicious host or switch can fabricate LLDP or \kw{packet-in} that encapsulate LLDP, causing controller to infer incorrect \textit{link-src}~\cite{hong2015poisoning} or \textit{link-dst}~\cite{baidya2020link}.

\inlinedsectionbf{Replay-Based Topology Poisoning}~%\footnote{Some literature refers to this as host-replay topology poisoning attacks.}}
A malicious host stores LLDP packets and replays them to another switch, causing the legitimate controller to infer a %wrong \textit{link-dst}
nonexistent link~\footnote{The link discovery discussed refers to switch-to-switch links.}~\cite{nehra2019sldp}.

\inlinedsectionbf{Relay-Based Topology Poisoning} A malicious switch intercepts LLDP packets and relays them to another switch, causing the legitimate controller to infer a wrong \textit{link-dst}~\cite{alimohammadifar2018stealthy}.

\inlinedsectionbf{Flow Entry-Induced Topology Poisoning} A malicious application or controller peer injects poisonous flow entries to override default entries to manipulate LLDP packet forwarding, causing the legitimate controller to infer a wrong \textit{link-dst}~\cite{chen2024manipulating}.

\subsection{Existing Defenses and Challenges}

Existing OFDP implementations differ in packet encoding and forwarding rules, but all retain recognizable LLDP-based discovery identifiers (Appendix~\ref{appendix:ofdp_impl}). 
In addition to the efforts by open-source controllers discussed in Appendix~\ref{appendix:ofdp_impl} to secure SDN topology discovery, researchers have proposed approaches aimed at topology poisoning detection and the development of secure SDN discovery protocols. Table~\ref{tab:link_fab_contermeasure} summarizes the effectiveness of existing countermeasures against various types of topology poisoning attacks.

\emph{Detection approaches}  primarily rely on known attack signatures and are therefore effective against previously identified topology poisoning attacks. 
For example, authentication-based schemes verify LLDP integrity to prevent spoofing and replay attacks, but offer limited protection against other vectors. Port-based verification~\cite{dhawan2015sphinx, baidya2020link} assumes each active port should map to only one peer, catching spoof-based links but not others. Latency-based approaches~\cite{smyth2017detecting, huang2020towards} infer fabrication by measuring anomalously high link latencies, particularly effective when hosts are involved. TopoGuard and TopoGuard+~\cite{hong2015poisoning, skowyra2018effective} track switch identities to defend against host-involved spoofing but do not address control plane misuse. 
A more advanced class of detection methods includes \emph{verification-based approaches}, which aim to detect fabricated links by validating network state through auxiliary observations. The most capable solutions, such as SPHINX~\cite{dhawan2015sphinx} and SPV~\cite{alimohammadifar2018stealthy}, can detect most of the topology poisoning attacks.
However, they suffer from scalability issues due to real-time monitoring on the data plane and fail to detect recent flow entry–induced topology poisoning attacks, as they overlook the control-plane attack surface~\cite{chen2024manipulating}.

\emph{Prevention methods} aim to harden the discovery protocol against tampering. Many controllers embed static or dynamic LLDP secrets, but reuse of Ethernet headers reveals packet intent, making them vulnerable to switch-spoofing, relaying, and flow entry–induced attacks. \textsc{SLDP}\cite{nehra2019sldp} introduces partial header randomization with token-based source MACs and flow entries tied to tokens. However, other static header fields remain recognizable and can still be exploited by malicious switches, controller applications, or peers to manipulate packet forwarding. \textsc{TILAK}\cite{nehra2019tilak} offers similar security scope by randomizing only \texttt{ether-dst}, and thus faces the same limitations as \textsc{SLDP}. \textsc{sOFTDP}~\cite{azzouni2018softdp} uses encryption and Bidirectional Forwarding Detection (BFD) but requires switch modifications and incurs high deployment cost.

\inlinedsectionbf{Challenges} Preventing topology poisoning in SDN remains a significant challenge due to the architecture’s open and programmable nature. While centralized control offers powerful capabilities, such as global network visibility and real-time reconfigurability, it also introduces additional attack surfaces. Moreover, the inherent programmability of SDN expands the potential attack vectors. The growing diversity of topology poisoning attacks highlights a critical need: to maintain a secure and trustworthy topology view, the controller must operate under the assumption that no other network component, including hosts, switches, applications, or controller peers, can be fully trusted. Although several schemes mitigate specific attacks once the attack vector is identified, achieving comprehensive prevention remains difficult. Designing a mechanism that not only prevents topology poisoning but also detects ongoing manipulation—while remaining lightweight and SDN-compatible—poses an even greater challenge.

A key distinction underlies this challenge: prevention versus detection. \emph{Prevention-oriented} protocols harden the discovery process so attacks cannot succeed in the first place, while \emph{detection-oriented} approaches monitor for anomalies after discovery has occurred. 
As Table~\ref{tab:link_fab_contermeasure} shows, existing secure SDN discovery protocols (e.g., SLDP, TILAK) focus on prevention but cannot detect ongoing manipulation, while detection approaches
(e.g., SPHINX, SPV) react to anomalies but cannot prevent attacks from influencing the topology view in the first place. \emph{\proj{} is the first protocol to systematically combine defenses that prevent attacks during link discovery and detect
manipulations that evade those preventative measures}: a \emph{moving-target} discovery process built on dynamic randomization of all discovery-packet fields blocks identifier-based attacks during discovery, while verification triggered on each link change event catches manipulations that bypass randomization.

\section{Scope and Threat Model}~\label{sec:threat_model}
\inlinedsection{Scope} \proj{} is a lightweight, SDN-compatible protocol for discovering the topology of pure OpenFlow networks while preventing and detecting topology-poisoning attacks\footnote{Hybrid deployments with both traditional and OpenFlow switches are outside the scope of this work.}. We cover topology-poisoning attacks that exploit static LLDP identifiers --- including Spoof-, Replay-, Relay-, and Flow Entry-Induced attacks --- as well as advanced attacks that identify discovery signatures through traffic analysis. Here, \emph{discovery signatures} refers to recognizable packet types specifically used for link discovery; attacks that manipulate prevalent data-plane packet types for non-discovery purposes are out of scope.
We assume that no flow entry matches solely on the incoming port (\texttt{in\_port}): matching only on \texttt{in\_port} provides coarse control, is rarely used in production, and contradicts SDN best practices~\cite{ONFSpec1.5.1,kreutz2014software,ONFTR521}. We also assume SDN-prevalent reactive forwarding~\cite{mckeown2008openflow,kreutz2014software}, meaning a table-miss entry is installed to send unknown packets to the controller.

\inlinedsection{Threat Model} 
We assume only the leader controller responsible for topology discovery is trusted. All other components are considered potentially compromised. These attacks may originate from malicious hosts~\cite{hong2015poisoning,nehra2019sldp}, applications or controller peers~\cite{chen2024manipulating}, or switches~\cite{baidya2020link,alimohammadifar2018stealthy}. Specifically, we assume: 
\begin{itemize}[leftmargin=15pt, itemsep=0pt, parsep=0pt, topsep=2pt, partopsep=2pt] 
\item A malicious host may analyze, fabricate, or replay discovery packets (e.g., LLDP) toward its connected switch. 
\item A malicious application or controller peer may read topology, nodes, and flow entries, and write flow entries. 
\item A malicious switch may inspect packets to recognize discovery packets (including randomized variants), forward such packets between switches with or without flow entries, and tamper with their corresponding \kw{packet-in} messages.
\end{itemize}

Among these, the malicious switch is the most powerful: it has direct data-plane access, can manipulate any packet crossing it --- not only those tied to specific flow entries or known discovery formats --- and operates at hardware forwarding speeds that defeat latency-based detection. \proj{}'s defense layers scale with adversary capability --- DecoyType and MorphType address host, control-plane, and switch attacks on static-identifier discovery packets, while CamoType is the additional layer to thwart an advanced malicious switch that redirects discovery packets outside the flow-entry mechanism or applies traffic analysis to randomized discovery packets.

\inlinedsection{Adversary Knowledge}
The adversary knows the full \proj{} design, including that
discovery-packet fields are randomized using a cryptographically secure PRNG
(§\ref{sec:pkt_gen}), but not the generator's private state. A malicious host
observes only its own link, whereas a malicious switch may observe and tamper
with all traffic traversing it, including table-miss packets sent to the
controller. A malicious application or controller peer may additionally observe
\kw{packet-in} events. We reasonably assume that no single compromised host/switch observes the majority of hosts, since hosts are
distributed across access links and switches see only traffic traversing them.
The trusted leader controller shares neither its PRNG state, the identity of
future discovery probes, nor its private packet-to-source mapping with any other
network component. Therefore, even after observing many
\kw{packet-in}/\kw{packet-out} samples, an adversary lacking the private PRNG
state cannot predict future randomized probe fields. A deployment that
substitutes a weaker generator forfeits this guarantee.

\inlinedsection{Stealthy Adversary Assumption} Consistent with our focus on covert topology poisoning that leaves other services intact (§Scope), we impose two behavioral constraints. First, adversaries do not manipulate normal data-plane traffic (e.g., ARP or IP), as this disrupts operations and exposes their presence. Second, adversaries do not alter default flow entries such as the table-miss — not a limit on what a compromised switch can do, but on what a stealthy attacker will do: table-miss drives reactive forwarding, so overriding it stalls flow setup for every host behind the switch and produces controller-visible outages, and can further be locked by configuration~\cite{khurshid2012veriflow,kazemian2013real}. This does not exempt a stronger attacker who installs higher-priority rules to catch discovery packets; \proj{} defends against that case via field randomization and CamoType (§IV). The result is an inescapable dilemma regardless of privilege: disrupting normal forwarding risks exposure, while leaving it intact renders topology manipulation detectable.

\section{\proj{}: Less is More }\label{sec:insights}

To defend against topology poisoning attacks that exploit static identifiers or discovery signatures~\cite{hong2015poisoning,baidya2020link,nehra2019sldp,alimohammadifar2018stealthy,chen2024manipulating}, \proj{} addresses the \emph{root cause} (\S\ref{sec:motivation}) of these attacks in modern SDN discovery protocols. Our approach is guided by three key insights.

% To secure the topology view of the controller against existing topology poisoning attacks that exploit static identifiers or discovery signatures~\cite{hong2015poisoning,baidya2020link,nehra2019sldp,alimohammadifar2018stealthy,chen2024manipulating,chen2024manipulating}, we introduce \proj{}, a flexible and stealthy discovery methodology that addresses the \textbf{root causes} of existing topology poisoning attacks in recent SDN discovery protocols.  Our approach is informed by three key insights.

% \textcolor{red}{Insight descriptions could be significantly shortened.}

\begin{tcolorbox}[colback=cyan!5, colframe=cyan!80!black,width=0.98\linewidth,left=3pt,right=3pt,top=3pt,bottom=3pt]
\textbf{Insight 1:} SDN does not require informative discovery packets to reveal \kw{link-src} for link discovery. 
\end{tcolorbox}

In topology discovery protocols, a link is inferred when a packet sent from one switch port is received by another. In traditional distributed networks---where switches operate independently and maintain local topology state---each switch must self-identify in its LLDP packet via a \kw{Chassis ID} and \kw{Port ID}. This allows neighboring switches to record the connection and identify the link (Appendix~\ref{appendix:lldp_detail}).

By contrast, SDN’s centralized architecture gives the controller complete knowledge of all switches once controller–switch connections are established. Embedding explicit \textit{link-src} information in discovery packets is therefore unnecessary and instead leaks sensitive details that attackers can exploit. We observe that neither the payload nor the header must reveal \textit{link-src}. Because the controller acts as both sender and receiver of discovery packets, it can maintain a local mapping between each transmitted packet and its corresponding \textit{link-src}. When a packet returns, the controller consults this mapping to recover the \textit{link-src}, enabling accurate topology reconstruction without embedding sensitive information in the packet. We assign each link-src a distinct randomized packet, ensuring that returning packets map unambiguously back to their source. %\trent{Is this possible if multiple discovery packets are sent in parallel?}\mm{yes, it is possible but won't impacet the correct link-src recovery because we send differently randomized packets for each port especially when sending them in parallel. do I need to emphasize that?}

Recent work such as SLDP~\cite{nehra2019sldp} notes that many LLDP fields are unnecessary in SDN and trims the format to an Ethernet header plus \kw{Chassis ID} and \kw{Port ID}. However, these fields remain sensitive: adversaries can still exploit them to spoof discovery packets and fabricate false links.

\begin{tcolorbox}[colback=cyan!5, colframe=cyan!80!black,width=0.98\linewidth,left=3pt,right=3pt,top=3pt,bottom=3pt]
\textbf{Insight 2:} Dynamically randomizing all discovery-packet fields 
%under specific conditions 
can preserve link discovery functionality while preventing information leakage, blocking static-identifier attacks, and enabling camouflage.
\end{tcolorbox}

As shown in Figure~\ref{fig:motivating_example}, static fields such as \texttt{ether-dst}, \texttt{ether-src}, and \texttt{ether-type} expose recognizable signatures that attackers can exploit to manipulate topology discovery. In traditional networks, discovery protocols rely on fixed packet formats and protocol-specific handling. In contrast, SDN provides an architectural advantage that \proj{} leverages to eliminate such vulnerabilities.

We observe that reactive forwarding---a key capability of SDN---creates an opportunity to secure SDN discovery. In SDN, each switch installs a table-miss flow entry that forwards all unmatched packets, regardless of their headers, to the controller. This mechanism provides an opportunity for discovery packets to reach the controller after exactly one link traversal without relying on flow entries that match a static \texttt{ether-type} or other fixed header fields. Essentially, any discovery packet that triggers the table-miss rule after one-link hop can serve as a discovery packet. Consequently, \proj{} gains the flexibility to select discovery packets from a wide range of packet types and fields while maintaining correct link inference, thereby eliminating the fundamental dependency on static identifiers that traditional LLDP-based protocols require and enabling security mechanisms to be naturally embedded in the discovery process.

\begin{tcolorbox}[colback=cyan!5, width=0.98\linewidth, colframe=cyan!80!black,left=3pt,right=3pt,top=3pt,bottom=3pt]
\textbf{Insight 3:} 
Existing defenses assume discovery packets must self-identify, inherited from distributed networks. Yet, centralized SDN architectures can remove this requirement, enabling fully controller-driven discovery.
%Existing defenses inherit distributed-network assumptions that discovery packets must be self-identifying, while the centralized SDN architecture eliminates this requirement and enables a fully centralized controller-driven discovery process.
\end{tcolorbox}
Prior approaches failed to fully exploit this architectural shift because they missed its key implication. SLDP introduces token-based MAC randomization for authentication but still relies on static identifiers, such as a fixed \kw{ether-type}, to mark discovery packets. This dependency leaks a stable discovery signature, leaving SLDP vulnerable to relay-based poisoning by malicious switches and flow entry–induced poisoning by malicious applications or controller peers.

% Prior approaches did not fully embrace this architectural shift because they overlooked this key insight.
% SLDP introduces token-based source MAC randomization for authentication but still relies on static identifiers—such as the \kw{ether-type} to identify discovery packets.
% This dependence exposes fixed discovery signatures, leaving SLDP vulnerable to relay-based topology poisoning by malicious switches, or flow entry–induced topology poisoning by a malicious application/controller peer.

SPV, in contrast, depends on the SDN controller to discover links and performs only post-hoc verification using IP probes. It reacts to link-change events reported by the controller—events derived from identifier-based LLDP discovery, and therefore can detect but cannot prevent topology-poisoning or proactively discover links. %\trent{SPV cannot discover links - isn't that fundamental to link discovery?  I assume I am missing something.}\mm{SPV is a detection system augmented to SDN controller, it retrieve link discovery \& link change information from the controller to verify link change with IP packets. It assumes the control plan is benign. Do I need to clarify something?} 
SPV also assumes a trusted control plane, making it unable to detect flow entry–induced poisoning launched by a malicious application or controller peer.

% SPV, in contrast, depends on conventional SDN discovery protocols and only conducts after-event verification using IP-based probes. It reacts to link-change events reported by the controller, which itself depends on a traditional identifier-based LLDP discovery process, and therefore detects but cannot prevent topology-poisoning attacks or proactively discover links. Moreover, SPV assumes a trustworthy control plane. A malicious application or controller peer can launch flow-entry–induced topology poisoning, which SPV fails to detect.

%\textcolor{red}{TJ: This is probably better than the last paragraph of Insight 2.}
% Neither SLDP nor SPV recognizes or builds on the key insights.
% Both inherit the traditional notion that discovery packets must disclose their source identity and retain a static format recognizable on the wire.
% \proj{}, by contrast, abandons this legacy assumption altogether. It redefines SDN discovery as an information-free, dynamically evolving process that removes every static identifier exploitable by an adversary and unifies prevention and detection in a single lightweight design.
% This architectural shift makes \proj{} a fundamentally new discovery paradigm—one that transforms randomization from a local authentication mechanism into a global defense strategy capable of neutralizing existing topology-poisoning vectors relying on static identifiers.

Neither SLDP nor SPV recognizes or leverages the central insight: both preserve the legacy assumption that discovery packets must reveal their source identity and follow a fixed, wire-recognizable format. \proj{} discards this assumption entirely. 
It redefines SDN discovery as an information-free, dynamically evolving process that eliminates all static identifiers exploitable by adversaries, unifying prevention and detection within a single lightweight design that is
capable of neutralizing all identifier-based topology-poisoning vectors.

\proj{} prevents topology-poisoning attacks~\cite{chen2024manipulating,hong2015poisoning,baidya2020link,nehra2019sldp,alimohammadifar2018stealthy} by removing static identifiers that attackers can use to manipulate the discovery packet forwarding. As shown in Figure~\ref{fig:morphDisc} (see page 3), with the \texttt{ether-src} randomized in a {MorphType} packet, $e^{B}2$ no longer matches the packet after Step~\ding{193}. Consequently, at Step~\ding{194}, Switch~B behaves normally, returning the discovery packet to the controller due to the table-miss. At Step~\ding{195}, the controller consults its mapping table to recover the \texttt{link-src} ($A2$), allowing the controller to identify the legitimate link $A2\rightarrow1B$. 
{CamoType} packets use ARP to follow the same path for verification.

\section{System Design}
\label{sec:design}

Building on these insights, we observe that any packet---regardless of header or payload---can support link discovery in SDN as long as it triggers the table-miss rule after one hop. \proj{} secures discovery through a three-layer defense, each layer strengthening protection against progressively more capable adversaries. \emph{DecoyType} retains traditional LLDP-based discovery as an intentional honeypot (\textbf{Insight 3}): attackers who exploit it reveal themselves through inconsistency, while those who ignore it are caught by deeper layers. Because DecoyType reuses existing LLDP infrastructure, including it costs nothing even when aware attackers ignore it. \emph{MorphType} eliminates informative payloads through full randomization (\textbf{Insights 1 and 2}): the controller's mapping table recovers \textit{link-src} without embedding it in the packet, and removing static identifiers prevents exploitation. \emph{CamoType} leverages camouflage as normal traffic for selective verification (\textbf{Insight 2}): under the stealthy adversary model, tampering with such normal data-plane packets risks immediate exposure. Each layer is detailed below.

\inlinedsectionbff{DecoyType} packets preserve traditional identifiers and serve as bait for attackers exploiting static, LLDP-style discovery. Their predictable formats and explicit \kw{link-src} fields attract legacy attacks, which the controller later detects by cross-checking inconsistent results.

\inlinedsectionbff{MorphType} packets randomize all headers and payloads to eliminate identifier and \kw{link-src} leakage. Running in parallel with DecoyType, MorphType provides clean, identifier-free discovery. Any discrepancy between the two immediately signals manipulation of discovery traffic.

\inlinedsectionbff{CamoType} packets act as a final, camouflaged verification layer. Whenever DecoyType or the verified MorphType result reports a link change --- i.e., a discovery result differing from the last-confirmed topology --- the controller issues CamoType ARP probes constructed with fresh, non-conflicting MAC/IP identities that mimic legitimate new-host arrivals and blend with normal ARP traffic on the wire (Section~\ref{sec:design}). Under the stealthy adversary model (Section~\ref{sec:threat_model}), selectively manipulating such traffic risks disrupting legitimate ARP processing and exposing the attacker.

\begin{table}[t]
\centering
\begin{threeparttable}
\caption{Deployment Prerequisites for \proj{}}
\label{tab:prerequisites}
\small
\begin{tabular}{p{4.8cm}p{0.7cm}p{0.7cm}p{0.7cm}}
\toprule
\textbf{Requirement} & \textbf{Decoy} & \textbf{Morph} & \textbf{Camo}\\
\midrule
(a) Default flow entries are protected & \ding{51} & \ding{51} &\ding{51}  \\
\midrule
(b) A default flow entry forwarding LLDP/BDDP packet to controller & \LEFTcircle & --- & ---\\
\midrule
(c) A default table-miss entry forwarding unmatched packet to controller & \LEFTcircle & \ding{51} & ---\\
\midrule
(d) A default flow entry forwarding ARP packet to controller & --- & --- & \ding{51}\\
\midrule
(e) No flow entry matching solely on \kw{in\_port} intercepting table-miss & \LEFTcircle & \ding{51} & ---\\
%\midrule
%(f) Normal data-plane traffic is not manipulated by stealthy adversaries & --- & \ding{51} & \ding{51}\\
\bottomrule
\end{tabular}
\begin{tablenotes}
\item[\ding{51}] Required; --- Not required;
\LEFTcircle DecoyType requires either (b), or both (c) and (e), depending on the controller, see Table~\ref{tab:flow_entries4discovery}.
\end{tablenotes}
\end{threeparttable}
\vspace{-8pt}
\end{table}

\proj{} requires the following conditions for each of DecoyType, MorphType, and CamoType to operate correctly, as shown in Table~\ref{tab:prerequisites}. It uses DecoyType and MorphType for periodic link discovery, and reserves CamoType for link change verification. This separation is motivated below:
\begin{itemize}[leftmargin=15pt, itemsep=0pt, parsep=0pt, topsep=2pt, partopsep=2pt]
\item \textbf{MorphType reinforces DecoyType:} MorphType performs periodic discovery without static identifiers, preventing topology poisoning attacks that rely on fixed signatures. %meanwhile, \update{any remaining inconsistency after MorphType verification constitutes a link-change event, which CamoType resolves.}
\item \textbf{CamoType defends against MorphType evasion:} if an advanced attacker intercepts all unknown packets to classify and attack MorphType, CamoType — disguised as normal data plane traffic — creates an inescapable dilemma: manipulating it disrupts normal communication and immediately exposes the attacker, while avoiding it leaves the topology manipulation detectable. %\proj{} further frustrates classification by \update{generating standards-compliant ARP probes that mimic new-host arrivals and introducing a randomized verification delay, so selectively targeting CamoType requires distinguishing it from legitimate ARP activity.}
%reusing genuine host-generated ARP packets and introducing a randomized verification delay, so CamoType cannot be selectively targeted without also disrupting legitimate ARP traffic. 
\item \textbf{CamoType is reserved for verification only:} overusing CamoType (e.g., ARP/IP) for periodic discovery would leak too many samples, helping attackers distinguish discovery probes from legitimate traffic. By limiting CamoType to link change events — which are infrequent — \proj{} preserves camouflage and denies attackers the sample volume needed for classification.
\end{itemize}

\subsection{Built-in Prevention and Detection}  
\label{sec:prev-detect}

\begin{algorithm}[t]
\caption{Prevention and Detection of \proj{}}\label{alg:verfi_detect}
\begin{algorithmic}[1]
\footnotesize
    \REQUIRE $DS$: Topo datastore;
    $P$: Ports; $T$: Mapping table; $q$: Num of rounds.
    \WHILE{true}
        \FOR{$p_i \in P$}
        \STATE $D.pkt_i$ $\gets$ \textbf{gen\_Dpkt}($p_i$); \quad\quad
        $M.pkt_i$ $\gets$ \textbf{gen\_Mpkt}($p_i$);
        \STATE \textbf{upd\_Map}($T$, $D.pkt_i$, $p_i$); \quad\quad \textbf{upd\_Map}($T$, $M.pkt_i$, $p_i$);
        \STATE $D.pktIn_i$ $\gets$ \textbf{probe}($D.pkt_i$);
         $M.pktIn_i$ $\gets$ \textbf{probe}($M.pkt_i$);
        \STATE $D.link_i$ $\gets$ \textbf{inf}($D.pktIn_i$,$T$);\quad
         $M.link_i$ $\gets$\textbf{inf}($M.pktIn_i$,$T$);
        \STATE $M.chg_i$ $\gets$ \textbf{diff}($DS, M.link_i)$;
        \IF{$|M.chg_i| > 0$}
            \STATE $M.link_i$ $\gets$ $\textbf{multi\_round\_verify}(M.chg_i, p_i, q)$;
            %\STATE \textbf{report}(randGuess or networkFault);
        \ENDIF
        \STATE $Chg.link_i$ $\gets$
        \textbf{diff}($DS, D.link_i, M.link_i$);
        \IF{$|Chg.link_i| > 0$}
            \STATE $R.sw$ $\gets$ \textbf{get\_sw}($Chg.link_i$);
            \STATE $C.pkt_i$ $\gets$ \textbf{gen\_Cpkt}($p_i$);
            \STATE \textbf{upd\_Map}($T$, $C.pkt_i$, $p_i$);
            \STATE $C.pktIn_i$ $\gets$ \textbf{probe}($C.pkt_i$);
            \STATE $C.link_i$ $\gets$ \textbf{inf}($C.pktIn_i$,$T$);
            \IF{$D.link_i == M.link_i == C.link_i$}
                \STATE \textbf{update}($DS, Chg.link_i$);
                \STATE \textbf{report}(AcceptLinkChg);
            \ELSIF{$M.link_i == C.link_i$}
                \STATE \textbf{update}($DS, C.link_i$);
                \IF{\textbf{detect\_pois\_Entry}($R.sw$)}
                    \STATE \textbf{report}(Marionette);
                \ELSIF{\textbf{detect\_host}($R.sw$)}
                    \STATE \textbf{report}(MalHost);
                \ELSE
                    \STATE \textbf{report}(MalSw);
                \ENDIF
            \ELSE
                \STATE \textbf{report}(AdvAttack);
            \ENDIF
        \ENDIF
        \ENDFOR
    \ENDWHILE
\end{algorithmic}
\end{algorithm}

\begin{figure}[t]
\centering{\includegraphics[width=0.2\textwidth]{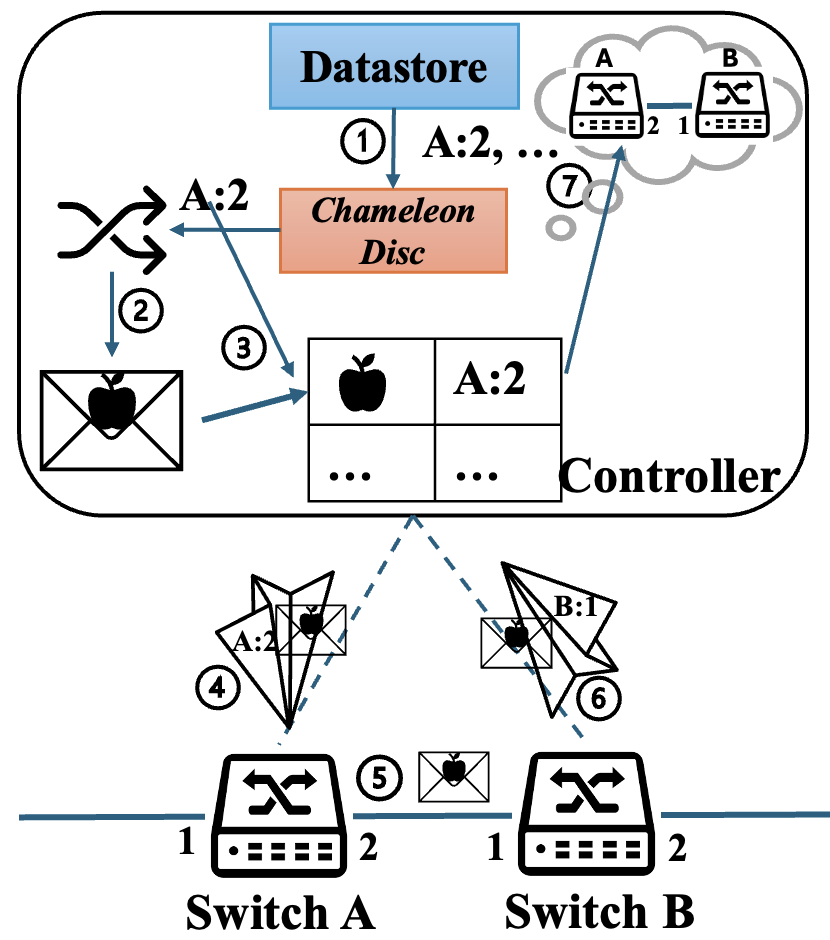}} 
    \caption{\proj{} System Design}    
    \label{fig:workflow}
    \vspace{-15pt}
\end{figure}

\proj{} embeds prevention and detection directly into its discovery workflow, allowing the controller to continuously discover network topology and simultaneously prevent and detect topology poisoning attacks as part of a single, integrated process. To ensure robust protection along with detection against forged links, we layer three discovery probes with progressively higher security levels and trigger CamoType verification on every link change event --- i.e., whenever any probe reports a discovery result differing from the last-confirmed topology --- accepting the link update only when CamoType corroborates the result. Algorithm~\ref{alg:verfi_detect} formalizes this workflow, where lines~3--6 correspond to the DecoyType and MorphType discovery phases and lines~10--13 correspond to the CamoType verification phase. The unified workflow encompassing all three phases is illustrated in Figure~\ref{fig:workflow}.

Figure~\ref{fig:workflow} and Algorithm~\ref{alg:verfi_detect} together illustrate the full discovery process. At step \ding{192}, \proj{} collects the current topology and switch-port information for each discovery cycle. At step \ding{193}, for each port $p_i$ (line 2), the controller generates corresponding DecoyType and MorphType packets for $p_i$ (line 3). At step \ding{194}, the controller stores the mapping of DecoyType and MorphType to their \textit{link-src} (line~4). At steps \ding{195}\ding{196}\ding{197}, the DecoyType and MorphType packets are sent out as probes and returned to the controller through the table-miss flow entry (line 5). At step \ding{198}, the controller recovers the \textit{link-src} by querying the mapping table and derives the \textit{link-dst} from the \kw{packet-in} message (line 6). The steps of CamoType verification (lines~14--17) follow the same logic.
%\trent{I recall seeing a lower MB memory overhead earlier.}\mm{I am confused. Which section are you discussing? I don't see memory overhead in this section.}

Continuing in Algorithm~\ref{alg:verfi_detect}, when MorphType reports a link change (line~7), \proj{} triggers multi-round MorphType verification to filter out transient network faults and random-guessing-based flow entry-induced topology poisoning (lines~8--10). Whenever any probe (DecoyType or MorphType) reports a link change — i.e., a discovery result differing from the last-confirmed topology (lines~11--12), the controller triggers CamoType verification (lines~14--17) and resolves the outcome via three branches. First, if DecoyType, MorphType, and CamoType all agree on the new link (line 18), the link update is accepted as legitimate (lines~19--20). Second, if only MorphType and CamoType agree (line 21), DecoyType is the manipulated layer; the controller updates the topology with CamoType's report and inspects the responsible switch to classify the attack source as Marionette, MalHost, or MalSw (lines~22--28). Third, if MorphType's report does not match CamoType (line 30), the controller treats CamoType as the ground-truth reference --- justified by the Stealthy Adversary Assumption (\S\ref{sec:threat_model}), under which tampering with CamoType's normal data-plane probes would manipulate ARP traffic and risk immediate exposure --- and flags an advanced attacker (AdvAttack) capable of manipulating MorphType forwarding (line 31). We formalize this guarantee in Proposition~\ref{prop:camo}.

This verification is robust to benign faults because link update acceptance requires corroboration: an incorrect verdict needs a DecoyType, MorphType, and CamoType probes to fail in the same cycle, and the joint failure is far less likely than failure of a single probe~\cite{bolot1993end}. A transient loss is therefore resolved by re-probing before a verdict is finalized; a late-arriving probe behaves like a loss for that round, and \kw{packet-in} congestion manifests as added delay, so all such faults reduce
to the transient-loss case. This exposure is limited to begin with, as verification is triggered only on a reported change and inter-switch links are stable the vast majority of the time~\cite{gill2011understanding}. Finally, a sustained high joint loss rate is operationally equivalent to a link failure, so treating the link as unavailable is the correct outcome rather than a false positive. This robustness costs one extra CamoType verification per change, adding a brief confirmation delay. The exchange is favorable: the delay is paid only on rare link-change events, and a short wait to confirm a link is far cheaper than committing the controller to a false topology.

The discovery process for DecoyType packets follows traditional SDN mechanisms and introduces no additional overhead. The primary challenges lie in designing and composing  MorphType and CamoType packets.

\subsection{EtherType and Packet Construction}\label{sec:ether-type-selection}
Constructing discovery packets correctly is essential for both concealment and network interoperability. The \kw{ether-type} field is tightly regulated: specific values denote standardized packet formats whose payloads must follow corresponding specifications. \proj{} therefore selects \kw{ether-type} values that either (i) correspond to a packet type forwarded to the controller under SDN flow-entry configuration, or (ii) fall within IEEE-unreserved ranges~\cite{ieee_registration} suitable for triggering table-miss. Each discovery packet is padded to at least 64 bytes to meet the minimum Ethernet frame size.

Table~\ref{tab:ether-type} (Apppendix~\ref{appendix:ether-type}) summarizes the selected types. LLDP and BDDP~\cite{onos_disc,floodlight} serve DecoyType. Unregistered EtherTypes form the MorphType pool: only 276 of the $2^{16}$ values are registered, leaving ample space for randomized, nonconflicting packets. ARP (0x0806) is ideal for CamoType, as it is naturally forwarded to controllers via a single preinstalled flow rule.\footnote{match: \kw{ether-type}: 0x0806; action: \kw{to-controller}.} IPv4/IPv6 can also serve as CamoType but require more careful flow-table handling to ensure reliable table-miss behavior~\cite{alimohammadifar2018stealthy}.

\subsection{Non-Disruptive Verification}\label{sec:nondisruptive}

CamoType leverages normal data-plane packet types for verification, requiring careful construction to avoid interference with existing controller services. \proj{} therefore synthesizes standards-compliant ARP probes that mimic legitimate new-host arrivals using a fresh MAC address not present in the controller's host state and an unused IP address from the local subnet. This avoids reusing an existing host identity, which could otherwise cause ARP Handler or Host Tracker to associate that host with the wrong switch port.

A CamoType-specific filter prevents these controller-generated synthetic identities from being processed as legitimate hosts by ARP Handler and Host Tracker. Because the synthetic MAC/IP pair is fresh and does not overlap with an existing host, the filter can identify CamoType probes without suppressing legitimate ARP traffic. CamoType probes are additionally issued after a randomized delay to reduce timing-based classification. IP packets can also serve as CamoType probes, but require more careful flow-table handling to ensure reliable forwarding; we refer to SPV~\cite{alimohammadifar2018stealthy} for a detailed treatment.

\subsection{Preventing Table-Miss Overrides}
\label{sec:table-miss-override}
Because DecoyType (\textit{LLDP}) and CamoType (\textit{ARP}) packets are naturally sent to the controller by design, 
preventing table-miss overrides is only crucial for MorphType to maintain accurate link discovery. Our focus is on scenarios that reuse the default table-miss flow entries for discovery without dynamic data plane monitoring.
The \kw{EtherTypes} considered here include unassigned types (MorphType) and Layer-2 packet types of ARP (CamoType) and LLDP (DecoyType) that are inherently designed to be sent to the controller.
The Layer-3 \kw{EtherType} of IP (CamoType) follows the treatment described in SPV~\cite{alimohammadifar2018stealthy}, to which we refer for further details. Additionally, we select randomized MAC addresses that are not present in the SDN network for \kw{ether-src} and \kw{ether-dst}, further minimizing interference from existing flow entries and reducing the likelihood of unintended matches. This strategy eliminates the need for dynamic data plane monitoring and ensures that normal flow entries do not inadvertently match discovery packets for the reasons in Appendix~\ref{appendix:reason}.

If the SDN controller does not configure flow entries based on the Ethernet header, any MAC address in the \kw{ether-dst} or \kw{ether-src} fields may inadvertently trigger table-miss flow entries. 
When flow entries explicitly match \kw{ether-src}, \kw{ether-dst}, or \kw{ether-type}, \proj{} ensures that discovery packets avoid using those specific Ethernet headers to prevent unintended flow entries by avoiding all MAC addresses already present in the target SDN network, as normal flow entries do not match MAC addresses absent from the network. 
This approach is efficient, with its performance backed by probabilistic guarantees, as demonstrated next. 

\subsection{Efficient Ethernet Header Randomization}\label{sec:pkt_gen}

We apply Ethernet header randomization only to MorphType packets, as DecoyType and CamoType require registered packet types to fulfill their respective roles. 
Because certain MAC addresses must be excluded from the candidate pool, and the pool size is enormous, maintaining the entire candidate pool for shuffling randomized MAC addresses or EtherType values consumes significant memory. To address this issue, we opt to generate randomized numbers first and then verify whether they are allowed or blocked. Below, we provide evidence supporting the practicality of this method.

Generating collision-free discovery-packet fields is reliable and efficient, even at large network scale. In large SDN deployments, the total number of MAC addresses (across hosts and switch ports) is at most tens of millions~\cite{nunes2014survey,cisco_app_design,jain2013b4}, while the theoretical maximum is 
$2^{48}$ (about 281 trillion). Thus, the probability of randomly generating a MAC address that exists in the network is $3.55 \times 10^{-8}$. The probability of generating two MAC addresses that both collide with existing ones is merely $1.26025 \times 10^{-15}$. Similarly, randomly generating an \kw{ether-type} that falls among the 276 effectively registered out of $2^{16}$ possible values is 
0.00421, and regenerating a second collision is $1.77441 \times 10^{-5}$. Hence, randomly selecting \kw{ether-src}, \kw{ether-dst}, and \kw{ether-type} while checking against known addresses/types remains highly feasible. To efficiently handle repeated checks against known addresses and types, we use a hash set (preprocessed for fast lookup), which provides an average lookup time of $O(1)$, further reducing time complexity.

\begin{algorithm}[t]
\caption{MorphType Discovery Packet Generation}
\begin{algorithmic}[1]\label{alg:disc_pkt_gen}
\footnotesize
    \REQUIRE $x$: Switch ID,\\
    $macList$: List of MAC addresses existing in the target network, \\
    $typeList$: List of EtherTypes that should be avoided. 
    \ENSURE $\{morphPkt^x_i\}$: MorphType packets for Port $i$ on Switch $x$. \\
    %\STATE{$match_{ether}^x$: Match set of flow entries matching Ethernet header on Switch $x$}
    \FOR{Port $i$ on Switch $x$}
        \STATE $sMAC=dMAC=0$; 
        \STATE $morphType=0$;
        \STATE $macList = add(sMAC, macList)$;
        \STATE $typeList = add(morphType, typeList)$;
        \WHILE{$sMAC\in macList$}
            %\STATE{$sMAC =$ Randomly generated 48-bit MAC address}
            \STATE{$sMAC = rand\_gen\_mac()$; }
        \ENDWHILE
        \WHILE{$dMAC\in macList$}
            %\STATE{$sMAC =$ Randomly generated 48-bit MAC address}
            \STATE{$dMAC = rand\_gen\_mac()$; }
        \ENDWHILE
        \WHILE{$morphType\in typeList$}
            %\STATE{$sMAC =$ Randomly generated 48-bit MAC address}
            \STATE{$morphType = rand\_gen\_type()$; }
        \ENDWHILE
        \STATE{$morphHdr = build\_hdr(sMAC$, $dMAC$, $morphType$)};
        \STATE $morphPkt^x_i = build\_disc\_pkt(morphHdr)$;
    \ENDFOR
\end{algorithmic}

\end{algorithm}

Building on the discussion of packet construction and table-miss enforcement,
we propose an algorithm (Algorithm~\ref{alg:disc_pkt_gen}) to generate \proj{} discovery packets. For each switch, packets are randomly generated for every port (lines 1--17). Each port starts with source MAC, destination MAC, and \kw{ether-type} set to 0 (lines 2--3), which are then added to the MAC address list and type list (lines 4--5). While any value of \kw{ether-src}, \kw{ether-dst}, and \kw{ether-type} remains in its respective list, it is regenerated randomly until it no longer appears (lines 6--14). The random regeneration 
draws from the operating system's cryptographically secure random source (/dev/urandom)~\cite{gutterman2006analysis}, built on a NIST-standardized DRBG design~\cite{2811}, so generated values remain unpredictable to an adversary even after observing many packet-in/packet-out samples. Finally, we construct the Ethernet header, build the discovery packet, and add it to the packet list (lines 15--17). The MAC list is sourced from the controller's host tracker, which maintains active host and switch-port MACs as part of standard SDN operation and evicts entries when hosts expire. Because the host tracker is internal to the trusted controller, attackers cannot manipulate this list to shrink or enlarge the exclusion pool. 

%In the next section, we present a theoretical analysis quantifying the resilience of this design.

\section{Theoretical Resilience Analysis}
\label{sec:resilience}

Although MorphType's defense relies on randomized discovery packets and is therefore inherently probabilistic, this section establishes the theoretical foundation for evaluating its resilience against random guessing under realistic conditions. 
We adopt a $q$-round MorphType verification model corresponding to Algorithm~~\ref{alg:verfi_detect}. Multi-round verification is triggered only when MorphType reports a change, thereby retaining lightweight single-round probing during normal operation while strengthening verification of suspicious or changed topology observations. CamoType provides a separate final verification layer, whose guarantee is formalized in Proposition~\ref{prop:camo}.

By incorporating practical parameters aligned with real-world discovery settings, we show that the probability of a successful topology poisoning attack is extremely low. Moreover, multi-round
verification further diminishes this likelihood, while the inclusion of CamoType final verification (e.g., ARP) elevates
the defense to an additional level --- since ARP is a ubiquitous and legitimate packet type that adversaries are highly unlikely to manipulate without exposing their presence; we formalize this guarantee in Section~\ref{sec:camoguarantee}. Together, these mechanisms provide strong assurance of \proj{}'s robustness even in adversarial environments.

Our objective is to bound MorphType resilience to random guessing — i.e., the expected number of rounds before an attacker can force MorphType to report a fabricated link via random flow-entry installation. End-to-end correctness is provided by CamoType verification on every link change event (Proposition~\ref{prop:camo}, §\ref{sec:camoguarantee}); the bound below quantifies how rarely MorphType fabrications even reach that gate. This bound covers attackers restricted to flow entry installation; compromised switches that bypass the flow-entry mechanism (e.g., advanced relay) fall outside MorphType's defense and are handled separately by CamoType (Proposition~\ref{prop:camo}). %\trent{Even if they have compromised a switch? If we are talking about different attackers at different times, we need to be explicit about that.} \mm{a compromised switch can do LLDP relay or advanced randomized packets relay beyond flow entry constraint. Here is the traditional LLDP relay. not the advanced one.} 
The attacker is limited to maintaining at most $p$ malicious flow entries matching either of \kw{ether-src}, \kw{ether-dst}, and \kw{ether-type} in the flow table without detection. The SDN controller re-discovers links every $l$ seconds; when MorphType reports a candidate link change, the controller requires $q$ consistent MorphType discoveries before passing the result to final verification.

Before formalizing the analysis, we address two attack strategies that appear viable but are fundamentally ineffective.

\inlinedsection{Wildcard and bitmask matching} Wildcard matching ignores a field entirely; bitmask
matching matches arbitrary bit patterns via a mask~\cite{openflow}. Both mechanisms are typically used on IP addresses, where the hierarchical subnet structure (CIDR prefixes) makes selective
matching meaningful --- a structure absent in EtherType and MAC address spaces. Consequently, any mask broad enough to catch MorphType's randomized EtherTypes also catches standard data-plane types (IPv4, IPv6, ARP). Catching a uniformly random MAC similarly requires a mask covering a substantial fraction of the address space, which also covers legitimate host and switch MACs. In both cases, intercepting MorphType requires disrupting normal traffic, violating the Stealthy Adversary Assumption (\S\ref{sec:threat_model}). The following analysis therefore restricts the attacker to specific-value matching.

\inlinedsection{Adaptive observation} Even if an attacker observes a MorphType packet in transit and extracts its randomized header fields, two barriers prevent exploitation. First, installing a matching flow entry requires completing an OpenFlow rule installation cycle within the millisecond window before the controller collects the current probe — practically infeasible. Second, \proj{} re-randomizes all header fields every discovery round, so knowledge of round $N$ provides zero advantage for round $N+1$.
The attacker must successfully intercept $q$ consecutive independent rounds, the expected time for which is quantified below.

\subsection{MorphType Resilience Markov Chain}\label{sec:markov}
Consider a set containing \(2^m\) consecutive numbers:
\[
S = \{0, 1, 2, \ldots, 2^m - 1\}.
\]
An attacker randomly selects \(p\) numbers from this set. In each round, the SDN controller randomly selects a number from a subset \(T \subseteq S\) of size \(n\). 
We define:
\begin{itemize}[leftmargin=15pt, itemsep=0pt, parsep=0pt, topsep=2pt, partopsep=2pt]
    \item A \textbf{\(q\)-hit streak} is $q$ consecutive rounds where the controller’s choice lies within the attacker’s $p$ chosen values.
    \item A \textbf{\(q\)-miss streak} is $q$ consecutive rounds where the controller’s choice lies outside attacker’s $p$ chosen values.
\end{itemize}
The objective is to compute the expected number of rounds required to achieve a \(q\)-hit streak or a \(q\)-miss streak.

This problem can be modeled using a Markov chain, as the probability of the next round making \(q\)-hit streak depends only on the current state of \((q-1)\)-hit streak and not on the sequence of prior states. Without loss of generality, we focus on analyzing the probability of a \(q\)-hit streak; the analysis for a \(q\)-miss streak follows analogously.

%\subsubsection{\textbf{Probabilities}}
\begin{itemize}[leftmargin=15pt, itemsep=0pt, parsep=0pt, topsep=2pt, partopsep=2pt]
    \item Probability of the SDN controller selecting a number in the attacker's set:
    \begin{equation}
        P_{\text{in}} = \frac{p}{n}
    \end{equation}
    \item Probability of selecting a number outside attacker's set:
    \begin{equation}
    P_{\text{out}} = 1 - P_{\text{in}} = \frac{n - p}{n}
    \end{equation}
\end{itemize}

%\subsubsection{\textbf{Setup}}
Define states \(0, 1, \ldots, q\):
\begin{itemize}[leftmargin=15pt, itemsep=0pt, parsep=0pt, topsep=2pt, partopsep=2pt]
    \item State $i$ (\(S_i^H\)): The process has \(i\)-hit streak.
    \item State $0$ (\(S_0^H\)): No such streak currently exists.
    \item State $q$ (\(S_q^H\)): A \(q\)-hit streak has occurred.
\end{itemize}

Transitions:
\begin{itemize}[leftmargin=15pt, itemsep=0pt, parsep=0pt, topsep=2pt, partopsep=2pt]
    \item From State \(i < q\), moving to state \(i+1\) occurs with probability \(P_{\text{in}}\), and falling back to state \(0\) (breaking the streak) occurs with probability \(P_{\text{out}}\).
\end{itemize}

%\subsubsection{\textbf{Expected Number of Rounds}}
\begin{itemize}[leftmargin=15pt, itemsep=0pt, parsep=0pt, topsep=2pt, partopsep=2pt]
    \item \(E_i\): Expected number of rounds to reach state \(q\) from state \(i\).
\end{itemize}

Using Markov chain theory and the law of total expectation, the recurrence relations for expected hitting times are:
\begin{equation}\label{law_total_exp}
E_i = 1 + P_{\text{in}} E_{i+1} + P_{\text{out}} E_0, \quad \text{for } i < q
\end{equation}
with the boundary condition:
\begin{equation}
E_q = 0
\end{equation}

The boundary condition \(E_q = 0\) arises because once the process reaches state $q$, the $q$-hit streak is achieved, and no further rounds are needed to reach the desired goal.
Solving this system of equations yields \(E_0\), the expected number of rounds to achieve \(q\)-hit streak— an informative measure of \proj{}'s resilience.

We evaluate the MorphType resilience to random guessing under realistic settings. We use the parameter values below in the Markov chain model to reflect real-world scenarios 
and provide insights into \proj{}'s resilience.

\begin{itemize}[leftmargin=15pt, itemsep=0pt, parsep=0pt, topsep=2pt, partopsep=2pt]
    \item \( p = 100 \): Max num of malicious flow entries in a switch.
    \item \(m=16\): Attack to match \texttt{ether-type} (16-bit).
    \item \( n = 2^m - 276 = 65,260 \): Num of available \texttt{ether-type}s.
    \item \( q = 3 \): Requires 3 consistent MorphType discoveries.
    \item \( l = 5\): Sending discovery every 5 seconds.
\end{itemize}

The resulting probabilities are:
\begin{equation}\label{prob_in}
P_{\text{in}} = \frac{p}{n} = \frac{100}{65,260} \approx 0.0015323322
\end{equation}
\begin{equation}\label{prob_out}
P_{\text{out}} = 1 - P_{\text{in}} = 0.9984676678
\end{equation}
\inlinedsection{Attack and Recovery Time}
Due to $q=3$ and (\ref{law_total_exp}), we have:
\begin{equation}
E_0 = 1 + P_{\text{in}}E_1 + P_{\text{out}}E_0
\end{equation}
\begin{equation}
E_1 = 1 + P_{\text{in}}E_2 + P_{\text{out}}E_0
\end{equation}
\begin{equation}
E_2 = 1 + P_{\text{in}}E_3 + P_{\text{out}}E_0 \end{equation}
\begin{equation}
E_3 = 0
\end{equation}

By recursively solving these equations, the expected number of rounds to achieve 
3-hit streak is given by:
\begin{equation}
E_0 = \frac{1 + P_{\text{in}} + P_{\text{in}}^2}{1 - P_{\text{out}} - P_{\text{in}}P_{\text{out}} - P_{\text{in}}^2P_{\text{out}}} = %\frac{1+P_\text{in}+P_\text{in}^2}{P_\text{in}^3} = 
\frac{1}{P_\text{in}^3} + \frac{1}{P_\text{in}^2}+\frac{1}{P_\text{in}}
\end{equation}
Similarly, the expected number of rounds to achieve a 3-miss streak is given by:
\begin{equation}
M_0 = \frac{1 + P_{\text{out}} + P_{\text{out}}^2}{1 - P_{\text{in}} - P_{\text{out}}P_{\text{in}} - P_{\text{out}}^2P_{\text{in}}} = %\frac{1+P_\text{out}+P_\text{out}^2}{P_\text{out}^3} =
\frac{1}{P_\text{out}^3} + \frac{1}{P_\text{out}^2}+\frac{1}{P_\text{out}}
\end{equation}

By substituting (\ref{prob_in}) and (\ref{prob_out}) into the analysis, we have:
\begin{itemize}[leftmargin=15pt, itemsep=0pt, parsep=0pt, topsep=2pt, partopsep=2pt]
    \item \textbf{Expected attack time} \(\approx 278\) million rounds (\textasciitilde 44 years if 5 seconds per round).
    \item \textbf{Expected recovery time} \(\approx 3\) rounds.
\end{itemize}

These bounds upper-bound the rate at which MorphType random guessing can produce sustained fabrications; the end-to-end guarantee combines this rate with CamoType verification.

\subsection{CamoType Stealth Guarantee}
\label{sec:camoguarantee}

While Section~\ref{sec:markov} bounds MorphType resilience to random guessing, an advanced adversary may still fabricate links by
intercepting all unknown packets at a compromised switch or by selectively manipulating one probe layer. CamoType verification, triggered on every link change event, closes these gaps. 
An adaptive adversary aware of \proj{} may attempt to selectively manipulate individual probe types. Any such manipulation that fabricates a link, however, produces a reported link change and thus triggers CamoType verification, reducing every fabrication strategy to the dilemma below.
The following proposition formalizes the guarantee.

\begin{proposition}[CamoType Stealth Dilemma]
\label{prop:camo}
Let $\mathcal{A}$ be any adversary satisfying the Stealthy Adversary Assumption (\S\ref{sec:threat_model}). If any probe (DecoyType or MorphType) reports a link change differing from the last-confirmed topology (Algorithm~\ref{alg:verfi_detect}), then at CamoType verification exactly one of the following holds:
\begin{enumerate}[leftmargin=15pt, itemsep=0pt, parsep=0pt, topsep=2pt, partopsep=2pt]
  \item[(i)] $\mathcal{A}$ does not tamper with the CamoType probe.
  CamoType reveals the true link, and any probe disagreeing with it is
  flagged as manipulated per Algorithm~\ref{alg:verfi_detect} (lines~14--27).
  \item[(ii)] $\mathcal{A}$ tampers with the CamoType probe, manipulating
  normal data-plane traffic and exposing $\mathcal{A}$.
\end{enumerate}
Hence no $\mathcal{A}$ can both fabricate links and remain undetected.
\end{proposition}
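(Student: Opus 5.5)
The argument is a case split on the adversary's behavior toward the CamoType probe, followed by a reduction of the final claim to that split.

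First I fix what triggers CamoType. By Algorithm~\ref{alg:verfi_detect} (lines~11--17), a CamoType probe $C.pkt_i$ is issued for port $p_i$ whenever $Chg.link_i$ is non-empty, that is, whenever DecoyType or the multi-round-verified MorphType result differs from the last-confirmed state in $DS$. Any link fabrication that could alter the topology view must first appear as such a difference, since $DS$ is only written at lines~19 and~22, and both are reached only after CamoType is issued. So every fabrication attempt reaches the CamoType gate. The dichotomy ``tamper or not tamper'' with $C.pkt_i$ is exhaustive and mutually exclusive, which gives the ``exactly one'' part.

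\emph{Case (i): no tampering.} I will argue that the untampered CamoType probe reveals the true link.
\begin{enumerate}[leftmargin=15pt, itemsep=0pt, parsep=0pt, topsep=2pt, partopsep=2pt]
\item $C.pkt_i$ is sent by \kw{packet-out} out of $p_i$ and traverses exactly one physical link.
\item By prerequisite (d) in Table~\ref{tab:prerequisites}, together with protection of default entries (a) and the Stealthy Adversary Assumption that default entries are not altered, the preinstalled ARP-to-controller rule returns it via \kw{packet-in} at the neighboring switch.
\item The fresh MAC/IP identity (\S\ref{sec:nondisruptive}) ensures no other entry intercepts it.
\item The controller recovers \textit{link-src} from its private mapping $T$, and \textit{link-dst} from the ingress metadata.
\end{enumerate}
One subtlety is that a malicious switch could forge the \kw{packet-in} ingress. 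I will treat this as tampering with the CamoType probe's handling, which places it in case (ii). The same goes for dropping, relaying or replaying the probe. So $C.link_i$ equals the true link, and the branches at lines~18--31 then take over. If the other probes agree with it, the change is legitimate. If only DecoyType disagrees, it is flagged (Marionette, MalHost or MalSw). If MorphType disagrees, AdvAttack is reported. In every sub-case the disagreeing probe is flagged and $DS$ receives only CamoType's value.

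\emph{Case (ii): tampering.} CamoType probes are standards-compliant ARP packets that mimic new-host arrivals and are sent after a randomized delay. They are therefore indistinguishable on the wire from legitimate ARP. Any tampering policy the adversary applies must thus also act on normal ARP traffic, which is manipulation of normal data-plane traffic. The Stealthy Adversary Assumption treats this as exposure.

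Combining the two cases: if $\mathcal{A}$ fabricates a link, either case (i) holds and the fabrication is detected and not committed, or case (ii) holds and $\mathcal{A}$ is exposed. This yields the final sentence of the proposition.

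The main obstacle is the indistinguishability step in case (ii). The argument needs that a selective policy cannot single out CamoType without affecting genuine ARP. I would justify this from the freshness and standards-compliance of the probes, the randomized delay, and the rarity of CamoType use, which denies classification samples. I would state this explicitly as the modeling assumption on which the proposition rests, rather than attempt a quantitative bound.
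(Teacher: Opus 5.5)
Your proposal is correct and follows essentially the same route as the paper's proof. Both use an exhaustive tamper/no-tamper split on the CamoType probe: case (i) is resolved by the branches of Algorithm~\ref{alg:verfi_detect}, and case (ii) is reduced to manipulation of normal ARP traffic through the assumed indistinguishability of CamoType from legitimate ARP, which the paper likewise takes from its adversary-knowledge and stealth assumptions rather than proving. Your additions are consistent refinements of the paper's terser argument: every fabrication reaches the CamoType gate because $DS$ is written only at lines~19 and~22, and forged \kw{packet-in} ingress, relaying, or dropping of the probe falls under case (ii).
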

\begin{proof}[Proof]
No adversary—malicious host, switch, application, or controller peer—can single out the CamoType: its identity is never revealed (§III), and CamoType is constructed to resemble ordinary ARP traffic and its identity is not disclosed outside the trusted controller. Under our adversary-knowledge and stealth assumptions, an attacker cannot reliably target CamoType without also risking interference with legitimate ARP traffic. Two cases arise. (i) 
$\mathcal{A}$ leaves the probe intact: CamoType returns the true link, and any disagreeing DecoyType or MorphType probe is flagged (Alg.~\ref{alg:verfi_detect}, lines 14–27). (ii) 
$\mathcal{A}$ redirects the probe: resembling ordinary ARP/IP traffic under our adversary model, doing so—whether a switch rewriting packets or an application or controller peer installing flow entries—also redirects normal traffic, violating constraint 1 and exposing 
$\mathcal{A}$. The cases are exhaustive, so no 
$\mathcal{A}$ can fabricate a link undetected.
\end{proof}
\section{System Evaluation}

We implement \proj{} on the open-source SDN controller OpenDaylight (Calcium release). We evaluate its effectiveness against known types of topology poisoning attacks in Section~\ref{sec:security_eval}, the fine-grained convergence latency of discovery in Section~\ref{sec:latency}, and the performance overhead in Section~\ref{sec:perf_eval}.

We deploy \proj{} on Ubuntu 22.04.4 with a 2.4 GHz 8-core Intel Core i9 processor and 32 GB of memory. CPU utilization follows Linux process accounting, where
100\% corresponds to one fully utilized core; thus, the 8-core
system has a total capacity of 800\%.
\proj{} is integrated as a submodule of OpenDaylight’s l2switch project~\cite{l2switch_code} with the \texttt{loopremover} modified to support reactive forwarding and a CamoType-specific ARP filter added to the \kw{arphandler} and \kw{hosttracker} to maintain compatibility with the host-tracker. The OpenFlow network is emulated using Mininet v2.3.0~\cite{mininet230} on another VM with the same system, CPU, and 16 GB of memory.

By default, \proj{} sends one DecoyType and one MorphType probe every 5 seconds during normal operation. The link discovered by MorphType ages in 40 seconds. A MorphType-reported change triggers the three-round MorphType verification, while a remaining link-change event triggers a single CamoType ARP probe for final verification.

To further obfuscate discovery behavior, 
MorphType frame sizes approximate the characteristic bimodal size distribution observed in real network traffic~\cite{john2007analysis}: 50\% are small frames (64--118 bytes), 40\% are large frames near the Ethernet MTU (1418--1518 bytes), and the remaining 10\% are distributed across the intermediate range. Each CamoType probe is additionally issued after a randomized delay of 0 to 2 seconds to hinder timing-based classification.
Administrators retain full control over the choice and scheduling of discovery packets, enabling secure topology discovery without exposing discovery-related information to any network component. 

\subsection{Effectiveness Against Topology Poisoning}\label{sec:security_eval}
\begin{figure}[t]
\centering{\includegraphics[width=0.38\textwidth]{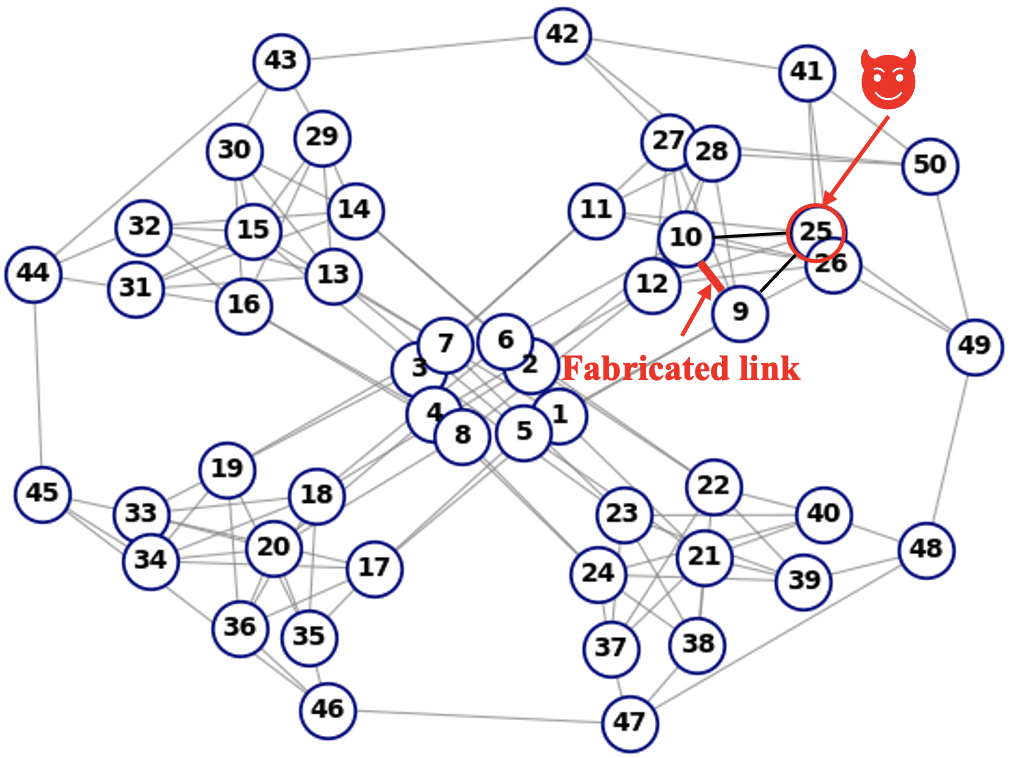}}
\vspace{-8pt}
\caption{\proj{} defeats relay attack}
\label{fig:50nodes_topo}
\vspace{-8pt}
\end{figure}

We evaluate \proj{} against all four types of topology poisoning attacks
defined in §\ref{sec:motivation}.%: spoof, replay, relay, and flow entry-induced poisoning. 
We use a 5-switch ring to verify the detection outcome; the complete per-attack results are shown in Appendix~\ref{appendix:attack_eval}. \proj{} prevents all four identifier-based attacks: DecoyType may be manipulated, while MorphType avoids identifier-specific forwarding
rules and CamoType verifies the resulting link change.

%Figure~\ref{fig:50nodes_topo} illustrates a relay attack on a \update{50-switch fat-tree data center topology} with 252 inter-switch links, in which Switch 25 forwards LLDP packets across its ports, fabricating a spurious link between Switch 9 and Switch 10. MorphType discovery is immune to this manipulation, and CamoType verification confirms the legitimate links around the affected ports. 

Figure~\ref{fig:50nodes_topo} illustrates a relay attack on a
50-switch fat-tree data-center topology with 252 inter-switch links. Malicious Switch~25 relays DecoyType LLDP packets,
causing the controller to infer the fabricated link between Switches~9 and~10. MorphType reports the genuine topology,
and the resulting disagreement triggers CamoType verification, which confirms the genuine neighbor and rejects the fabricated
link.

\subsection{Fine-Grained Convergence Latency}
\label{sec:latency}

We evaluate how quickly \proj{}'s accepted topology reflects four types of change: link down, link up, malicious relay, and 50\% packet loss. All experiments use the 252-link topology in
Figure~\ref{fig:50nodes_topo}. All events are launched from the \proj{} VM to the mininet VM via \textit{ssh}, and timestamps are taken on the controller's clock to avoid cross-VM synchronization issues.

We record 100 link-down/up pairs, 100 malicious-relay attacks, and
100 events of 50\% bidirectional packet loss (Table~\ref{tab:latency}). We restart the controller for each case and randomly select the affected links. Events are spaced 17.25\,s apart, deliberately offset from the 5\,s periodic discovery interval. The detailed latency distribution of each case is in Appendix~\ref{appendix:latency}.

\begin{table}[t]
\centering
\caption{Convergence latency by stage}
\label{tab:latency}
\vspace{-5pt}
\footnotesize
\setlength{\tabcolsep}{4pt}
\begin{tabular}{llrrrrr}
\hline
event & stage & \# events & median & IQR & p90 & max \\
      &       &           & (s)    & (s) & (s) & (s)  \\
\hline
link down & detect & 100 & 0.01 & 0.01--0.01 & 0.01 & 5.19 \\
link down & verify & 100 & 4.52 & 4.10--4.95 & 5.40 & 5.54 \\
link down & \textbf{converge} & 100 & \textbf{4.53} & 4.15--5.06 & 5.41 & 8.95 \\
\hline
link up & detect & 100 & 0.30 & 0.27--0.30 & 0.31 & 0.67 \\
link up & verify & 100 & 1.05 & 0.49--1.57 & 1.82 & 2.06 \\
link up & \textbf{converge} & 100 & \textbf{1.37} & 0.82--1.83 & 2.18 & 2.49 \\
\hline
mal-relay & detect & 100 & 2.95 & 1.52--4.12 & 4.70 & 5.18 \\
mal-relay & verify & 100 & 1.22 & 0.78--1.69 & 1.91 & 2.04 \\
mal-relay & \textbf{converge} & 100 & \textbf{4.06} & 2.85--5.16 & 5.96 & 7.21 \\
\hline
pkt loss & detect & 97* & 14.46 & 9.36--22.48 & 28.70 & 41.22 \\
pkt loss & verify & \multicolumn{5}{l}{sent to verify, but the probe never returned} \\
pkt loss & \textbf{converge} & 97* & \textbf{14.46} & 9.36--22.48 & 28.70 & 41.22 \\
\hline
\multicolumn{7}{l}{\footnotesize *Loss still allowed enough probes to refresh the link in 3 of 100 events.}\\
\end{tabular}
\vspace{-5pt}
\end{table}

Overall, \proj{} converges quickly for both legitimate changes and attacks. Link-down events are detected almost immediately through asynchronous OpenFlow \kw{port-status} notifications (0.01\,s median), allowing the controller to observe the failure without waiting for the next periodic discovery probe, but require 4.53\,s to converge because confirming an absent link requires waiting for unanswered probes. Link appearance is faster (1.37\,s median), as event-triggered probes return and allow verification to complete without silence timeouts. Malicious relays are detected in 4.06\,s median: It is first exposed when a periodic DecoyType probe reports a fabricated link, so its detection latency includes the wait until that probe occurs.

%The trigger is a link change notification returned by a Decoy LLDP packet, which takes longer than a prompt port status update. CamoType returns from the genuine neighbor and directly contradicts the fabricated link. 

Under 50\% bidirectional packet loss, 97 of 100 impaired links were retired within the 45\,s window, with a median convergence time of 14.46\,s. Because partial loss generates no port-status event and successful probes can still refresh the link, retirement is driven by periodic aging rather than change-triggered verification. CamoType is triggered only after the link is removed and therefore does not affect this convergence time. More generally, returned probes complete quickly, whereas missing replies incur configured timeout delays.

\subsection{Performance Overhead Analysis}
\label{sec:perf_eval}

We evaluate \proj{}'s controller overhead from three perspectives:
resource usage under attacks, the additional
packet-to-\textit{link-src} mapping, and scalability with network size.

\inlinedsection{Overhead Under Attacks}
We first measure \proj{}'s steady-state overhead relative to the baseline controller. On the 252-link topology, \proj{} uses 46.6\% mean CPU versus 33.6\% for the baseline, while post-GC live heap is 199 MB versus 193 MB, with substantially overlapping distributions. We then examine whether attack handling introduces additional transient overhead beyond this steady-state cost. As shown in Appendix~\ref{appendix:overhead},
the injected attack and recovery events produce no visible
CPU or retained-heap spikes. Thus, \proj{} incurs additional
steady-state discovery overhead, but processing an attack does
not measurably increase resource usage beyond that baseline.

%On the 252-link topology, \proj{} uses 46.6\% mean CPU versus 33.6\% for baseline (same controller basis but without the \proj{} module), while post-GC live heap is 199\,MB versus 193\,MB with substantially overlapping distributions. More importantly, the injected attack and recovery events cause no visible increase in either CPU or retained memory, indicating that \proj{}'s attack handling introduces negligible additional resource cost beyond its normal discovery overhead. Detailed measurements are provided in Appendix~\ref{appendix:overhead}.
%\input{overhead_subsection}

\inlinedsection{Scalability Across Network Sizes}
\begin{table}[t]
\caption{CPU \& mapping entries overhead across topo sizes}
\label{tab:scalability}
\footnotesize
\setlength{\tabcolsep}{4pt}
\begin{tabular}{ll rrrr rrrr}
\hline
(Node,  & \multirow{2}{*}{Ctrl} & \multicolumn{4}{c}{CPU (\%)} & \multicolumn{4}{c}{\# of mapping entries} \\
%\cmidrule(lr){3-6}\cmidrule(lr){7-10}
Link) & & min & max & \textbf{mean} & med. & min & max & mean & \textbf{med.} \\
\hline
(40, & base & 7.0 & 101.0 & \textbf{30.5} & 24.0 & -- & -- & -- & -- \\
%\cmidrule(lr){2-6}\cmidrule(lr){7-10}
282)  & cham & 11.0 & 118.0 & \textbf{36.0} & 29.0 & 40 & 80 & 63.0 & \textbf{60} \\
\hline
(50, & base & 3.0 & 125.7 & \textbf{36.2} & 36.0 & -- & -- & -- & -- \\
%\cmidrule(lr){2-6}\cmidrule(lr){7-10}
252) & cham & 5.0 & 124.0 & \textbf{38.8} & 32.0 & 50 & 100 & 71.1 & \textbf{74} \\
\hline
(80, & base & 13.0 & 103.0 & \textbf{50.1} & 48.5 & -- & -- & -- & -- \\
%\cmidrule(lr){2-6}\cmidrule(lr){7-10}
314) & cham & 8.0 & 230.0 & \textbf{57.2} & 51.0 & 80 & 160 & 121.6 & \textbf{120} \\
\hline
(120, & base & 39.0 & 137.0 & \textbf{73.5} & 70.0 & -- & -- & -- & -- \\
%\cmidrule(lr){2-6}\cmidrule(lr){7-10}
816) & cham & 2.0 & 461.0 & \textbf{96.5} & 73.2 & 60 & 240 & 161.8 & \textbf{150} \\
\hline
\end{tabular}
\vspace{-8pt}
\end{table}

We evaluate controller CPU and \proj{}'s mapping-table state across four fat-tree topologies ranging from $(N,L)=(40,282)$ to $(120,816)$.
Each experiment starts from a clean controller and starts monitoring once stabilized; CPU is sampled every
${\sim}1.2$\,s for 300\,s, while mapping-table size is recorded from the
controller approximately every 30\,s. As Table~\ref{tab:scalability} shows, the mapping table grows with topology size but remains bounded: its mean size increases from 63 entries at 40 switches to 162 at 120 switches, with a maximum of 240 entries on the 816-link topology. This behavior is consistent with its design: entries represent only outstanding probes and are removed when probes return or expire, so state scales with concurrently active discovery rather than controller uptime. Together, these results show that \proj{} introduces modest steady-state overhead while larger fabrics mainly increase burst processing and in-flight mapping state.

CPU also increases with topology size for both controllers. Relative to the baseline, \proj{} adds only 2.6--7.1 percentage points of mean CPU through 80 switches, while the 120-switch topology increases the mean by 23.0 points. The corresponding median increase at 120 switches is only 3.2 points, indicating that the larger mean is driven primarily by transient CPU bursts rather than sustained load. Such startup bursts can be mitigated by lengthening the MorphType discovery, which is demonstrated in the trade-off evaluation.

\begin{figure}[t]
\centering
\includegraphics[width=0.85\columnwidth]{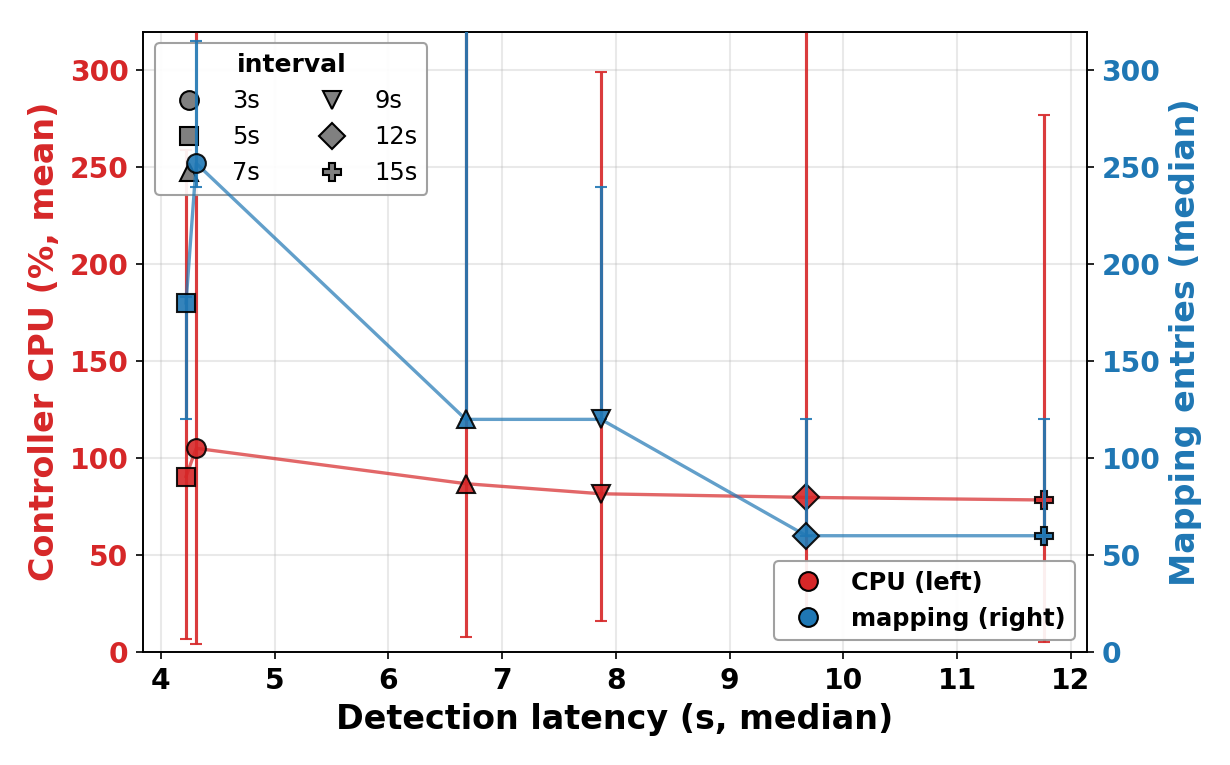}
\vspace{-10pt}
\caption{Trade-off between latency, CPU, and mapping entries}
\vspace{-15pt}
\label{fig:tradeoff}
\end{figure}

\inlinedsection{Trade-off between Detection and Overhead}
On the 120-switch topology, we vary the shared DecoyType/MorphType discovery interval from 3 to 15\,s while retaining the randomized CamoType delay of $[0,2]$\,s (Figure~\ref{fig:tradeoff} and Table~\ref{tab:tradeoff} in Appendix~\ref{appendix:tradeoff}). For each interval, we evaluate 30 relay attacks. Median detection latency increases from \textbf{4.3\,s} at a 3\,s interval to \textbf{11.8\,s} at 15\,s, while mean CPU utilization decreases from \textbf{105\%} to \textbf{78\%}. The median mapping-table size decreases from \textbf{252} entries at 3\,s to \textbf{60} entries at 12--15\,s, with a substantial reduction already achieved at 5--7\,s. These results expose a clear trade-off: shorter discovery intervals improve responsiveness but increase probe-processing load and the number of concurrent mappings, whereas longer intervals reduce overhead at the cost of slower attack detection. In our experiments, intervals of 5--7\,s provide a practical middle ground. 

Additional discussion is provided in Appendix~\ref{appendix:discussion}.

\section{Conclusion}
\noindent
\proj{} is a moving-target defense against topology poisoning attacks while remaining compatible with SDN workflows. Our analysis and evaluation demonstrate resilience against identifier-based and advanced relay attacks with practical convergence and a tunable security--performance trade-off.

%\proj{} combines dynamic randomization with decoy and camouflage mechanisms to prevent and detect identifier-based topology poisoning attacks while remaining compatible with existing SDN workflows.  MorphType provides probabilistic resilience through multi-round verification, while CamoType addresses stronger adversaries through stealth-based verification. Our implementation and evaluation demonstrate that \proj{} effectively detects both known identifier-based attacks and advanced relay attacks while providing practical convergence latency and a tunable security--performance trade-off.

%-------------------------------------------------------------------------------
% optional clearing of the page

\bibliographystyle{plainurl}
\bibliography{ref}
\appendix
\section{SDN Topology Discovery Implementations}~\label{appendix:ofdp_impl}
There is no standardized SDN topology discovery protocol. Although OFDP is the de-facto approach, open-source controllers differ in three key aspects:

\begin{itemize}[leftmargin=15pt, itemsep=0pt, parsep=0pt, topsep=0pt, partopsep=0pt]
\item the LLDP \kw{ether-header};
\item the payload encoding the \textit{link-src};
\item the flow entries used to return LLDP packets.
\end{itemize}

\begin{table}[t]
\small
\caption{LLDP Packet Fields in SDN Controllers~\cite{nehra2019tilak}}
\label{tab:LLDP_implementations}
\footnotesize
\setlength{\tabcolsep}{2.25pt}
\centering 
\begin{tabular}{llllll}
\toprule
\textbf{Controller} & \textbf{Ether-Dst} & \textbf{Ether-Src}  & \textbf{Ether-} & \textbf{Payload} & \textbf{Security}\\
& &  &\textbf{Type} & \textbf{\{Src Sw,} & \\
& & & & \textbf{ Port ID\}} & \\
\midrule
Ryu~\cite{ryu_code} & 01:80:c2: & Src Port& 0x88cc & Chassis ID,  & None\\
 & 00:00:0e& Mac Addr &  & Port ID & \\
\midrule
Pox~\cite{pox} & 01:23:20: & Src Port & 0x88cc & Chassis ID, & None\\
& 00:00:01& MAC Addr & & Port ID & \\
\midrule
Floodlight & 01:80:c2: & Src Port & 0x88cc & Opt.<Sw ID>,  & Static Hash\\
~\cite{floodlight} &  00:00:0e & MAC Addr & & Port ID & \\
\midrule
ODL~\cite{opendaylight_code} & 01:23:00: & Src Port & 0x88cc & Opt.<Sw ID, & Partial\\
 &  00:00:01 & MAC Addr & & Port ID> & Static Hash \\
\midrule
ONOS~\cite{onos_code} & a5:23:05: & Fingerprint & 0x88cc & Chassis ID, & Dynamic  \\
 & 00:00:01 &  &  & Port ID & Secret \\
\bottomrule
\end{tabular}
\vspace{-5pt}
\end{table}

Table~\ref{tab:LLDP_implementations} summarizes LLDP configurations across five controllers, showing variations in \kw{ether-dst}, \kw{ether-src}, and payload fields. Despite these differences, all rely on a fixed LLDP \kw{ether-type}—the \textbf{root cause} exploited by existing topology poisoning attacks. Floodlight, OpenDaylight (ODL), and ONOS embed secrets in LLDP packets to authenticate them, but remain vulnerable to spoofed \kw{packet-in} messages from compromised switches and to relay-based or flow-entry–induced attacks (Table~\ref{tab:link_fab_contermeasure}), which manipulate forwarding rather than packet contents. Floodlight’s static cryptographic hash is also vulnerable to replay attacks.

Table~\ref{tab:flow_entries4discovery} shows two main types of flow entries for discovery:
\begin{itemize}[leftmargin=15pt, itemsep=0pt, parsep=0pt, topsep=0pt, partopsep=0pt]
\item table-miss entries, which forward all unmatched packets to the controller;
\item LLDP-specific entries, which match on the LLDP \kw{ether-type} and forward packets to the controller.
\end{itemize}

Floodlight and ODL rely on table-miss entries, while ONOS, Ryu, and Pox install LLDP-specific entries. However, the discovery-related flow entries in Pox, Floodlight, ODL, and ONOS can be overridden by malicious high-priority rules from Marionette. Ryu assigns its LLDP-specific entries the highest priority, but attackers can lower default priorities or insert competing high-priority entries to attack. Matching behavior among equal-priority entries is switch-dependent~\cite{openflow}, complicating security in multi-vendor environments.

Overall, none of the five implementations provides comprehensive protection against topology poisoning attacks.
\begin{table}[t]
%% increase table row spacing, adjust to taste
%\renewcommand{\arraystretch}{1.3}
% if using array.sty, it might be a good idea to tweak the value of
% \extrarowheight as needed to properly center the text within the cells
\small
\caption{Flow Entries Assisting SDN Link Discovery~\cite{nehra2019tilak}}
\label{tab:flow_entries4discovery}
%\vspace{-0.5\baselineskip}
\footnotesize
\setlength{\tabcolsep}{2.25pt}
\centering 
%\begin{tabular}[htbp]{p{1.3cm}p{2cm}p{1cm}p{3cm}}
\begin{tabular}[htbp]{llll}
\toprule
\textbf{Controller} & \textbf{Match} & \textbf{Priority} &\textbf{Action}\\
\midrule
Ryu~\cite{ryu} & \kw{ether-type}: 0x88cc  & 65535 & To-controller\\
& \kw{ether-dst}: 01:80:c2:00:00:0e&& \\
\midrule
Pox~\cite{pox} & \kw{ether-type}: 0x88cc  & 65000 & To-controller\\
&\kw{ether-dst}: 01:23:20:00:00:01 && \\
\midrule
Floodlight~\cite{floodlight} & Table-miss  & 0  & To-controller\\
\midrule
ODL~\cite{opendaylight} & Table-miss  & 0  & To-controller \\
\midrule
ONOS~\cite{onos_disc} &  \kw{ether-type}: 0x88cc & 40000 & To-controller\\
\toprule
\end{tabular}
\vspace{-10pt}
\end{table}

\section{Case Study of LLDP in SDN Controllers}\label{appendix:lldp_detail}
We analyze the behavior of trending open-source SDN controllers.
As detailed in Section \ref{sec:motivation},  a controller sends discovery packets to switches using \kw{packet-out} messages, and these packets are returned from neighboring switches to the controller via \kw{packet-in} messages. Figure \ref{fig:messages_breakdown} decomposes the messages involved in the discovery process, including implementation examples from ONOS and OpenDaylight.

\begin{figure}[t]
\centering
\subfigure[\kw{Packet-out} Message Breakdown]{\label{fig:packet-out} \includegraphics[width=0.45\textwidth]{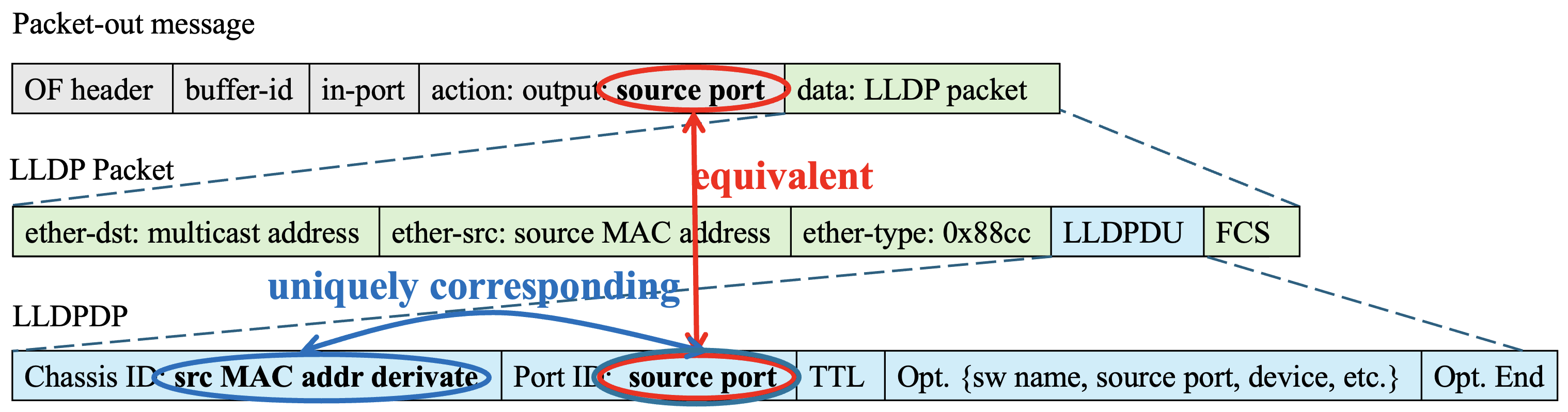}}
\subfigure[ONOS \kw{Packet-in} Message Breakdown]{\label{fig:onos_packet-in} \includegraphics[width=0.45\textwidth]{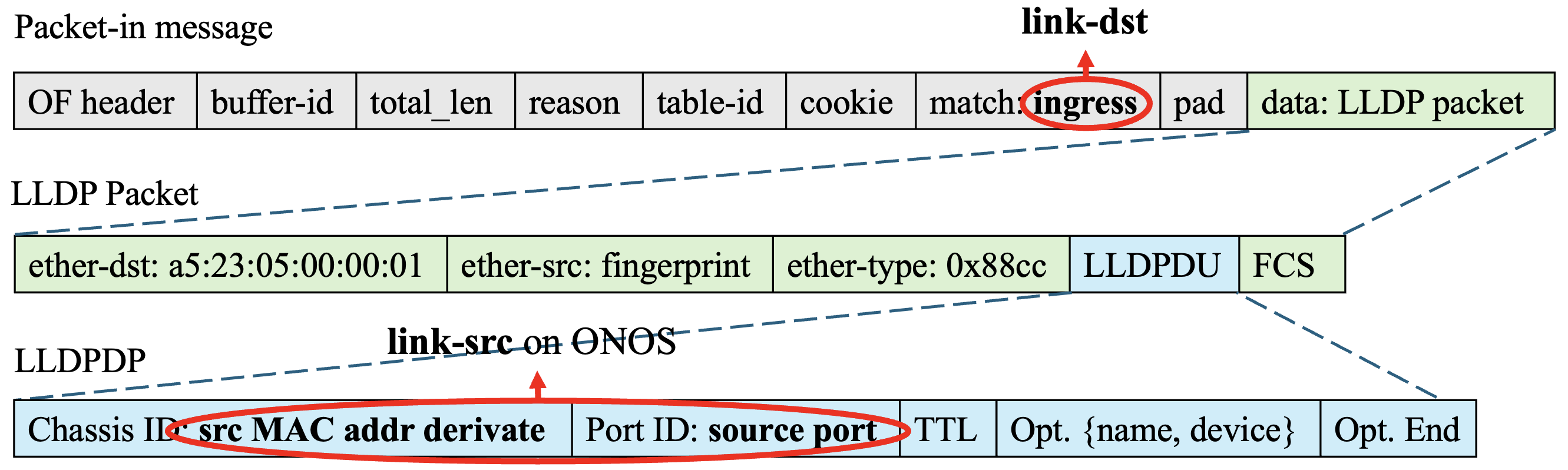}}
\subfigure[ODL \kw{Packet-in} Message Breakdown]{\label{fig:odl_packet-in} \includegraphics[width=0.45\textwidth]{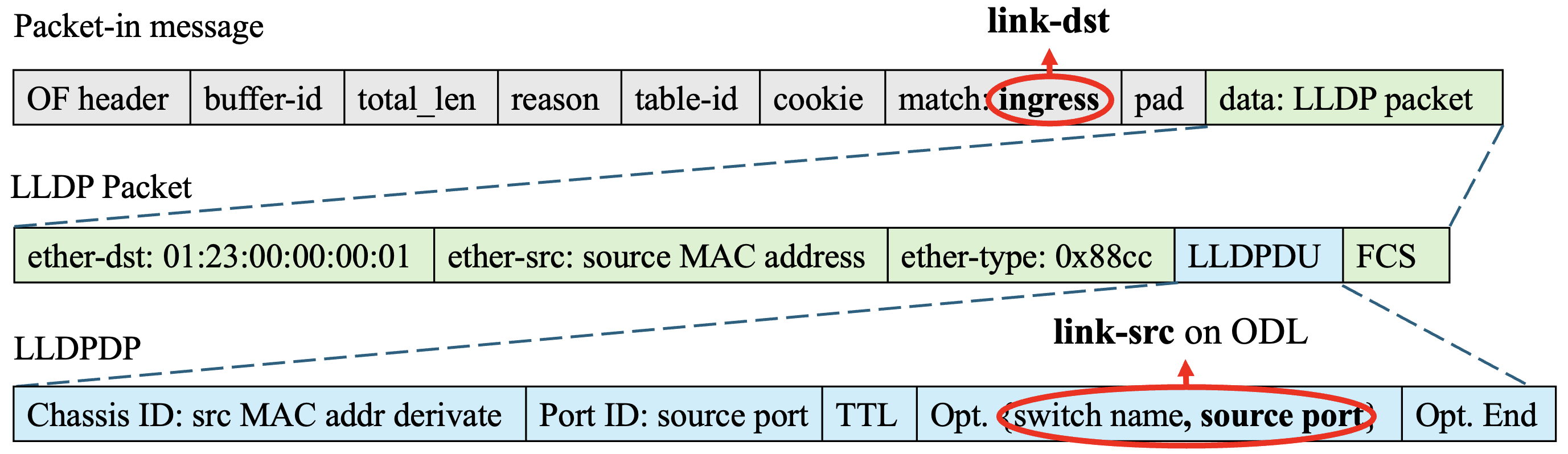}}
\subfigure[LLDP Packet Redundencies]{\label{fig:key_observation} \includegraphics[width=0.45\textwidth]{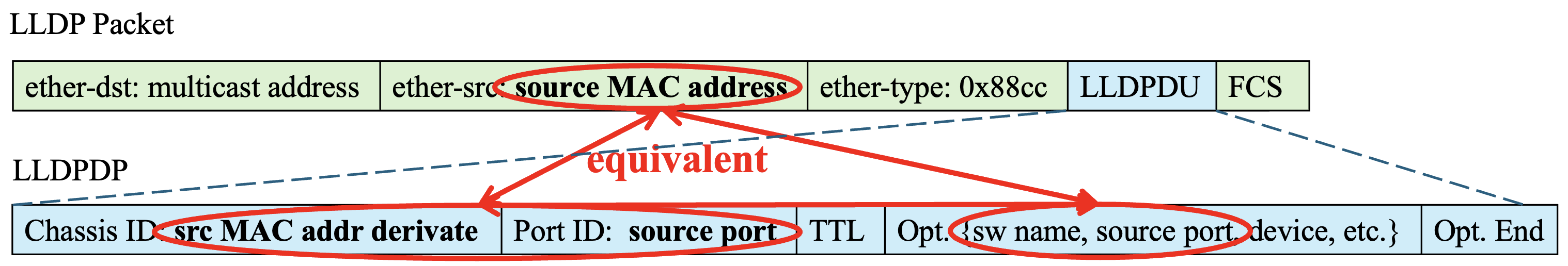}}
\caption{Field-Level Breakdown of Discovery Process Messages}
\label{fig:messages_breakdown}
\end{figure}

The key insight from Figure \ref{fig:packet-out} is that the LLDP packet is embedded as \kw{data} within the \kw{packet-out} message sent by the controller to the switch. Upon receiving this \kw{packet-out} message, the switch forwards the \kw{data} based on the instructions in the \kw{action} field. Crucially, the output port specified in the \kw{action} must correspond to the \kw{Port ID} associated with the LLDP packet to ensure accuracy. Each \kw{Chassis ID} uniquely corresponds to its \kw{Port ID} because it is a MAC address derivative of the source port. Notably, the MAC address of the source port also determines the value of \kw{ether-src}.

When a neighboring switch receives the LLDP packet, it forwards it back to the controller due to the table-miss or LLDP flow entry, as discussed in Section \ref{sec:motivation}. Figures \ref{fig:onos_packet-in} and \ref{fig:odl_packet-in} illustrate the differing implementations of the \kw{packet-in} message for ONOS and OpenDaylight controllers, respectively. In both cases, the controllers determine the \textit{link-dst} based on the ingress port of the \kw{packet-in} message. The distinction lies in determining the \textit{link-src} (\textit{link-src}), which requires parsing the LLDP packets. In the ONOS implementation, the \textit{link-src} is inferred from the mandatory \kw{Chassis ID} and \kw{Port ID} fields, while optional fields contain additional "name" and "device" information. Conversely, the OpenDaylight implementation derives \textit{link-src} information from the optional fields, which store "switch name" and "source port" details initially recorded by the controller.

From this example, we can see that there are various ways to store \kw{link-src} information. However, to the best of our knowledge, all open-source controllers, including both ONOS and OpenDaylight, overlook that \kw{ether-src} alone is sufficient to retrieve \textit{link-src} information and fail to utilize it for this purpose. That is because:
(1) \kw{ether-src} (the source switch port MAC address) is equivalent to the \kw{Chassis ID} (derivation from the
source port’s MAC address); (2) the SDN controller maintains a one-to-one mapping between port MAC addresses and switch Port IDs in its data store. Therefore, \textbf{Observation 1} is justified. Given the controller’s knowledge of all switches, any packet with the \kw{ether-src} set to a source switch port can be used for topology discovery.

\section{Non-Conflicting Randomization}\label{appendix:reason}

Selecting randomized MAC addresses that are not present in the SDN network eliminates the need for dynamic data plane monitoring and ensures that normal flow entries do not inadvertently match
discovery packets.
\begin{itemize} [leftmargin=15pt, itemsep=0pt, parsep=0pt, topsep=2pt, partopsep=2pt]
    \item Hosts and switches, which have MAC addresses, change infrequently in a non-wireless network\footnote{When devices are replaced or reconfigured (months to years)}; so, their corresponding flow entries, if any, also tend to remain stable.
    \item Flow entries matching individual Ethernet source or destination addresses are rare\footnote{SDN rules typically match on IP and TCP/UDP not MAC addresses.}. %and wildcarding MAC addresses for matching is also less common\footnote{Wildcarding MACs adds overhead, offers limited policy value, and is inconsistently supported in hardware.}.
    \item Flow entries that match only on \kw{ether-type} typically forward control packets to the controller (e.g.,  ARP and LLDP), aligning with \proj{}’s goal of ensuring discovery packets reach the controller.
\end{itemize}

\section{Markov Chain Model}~\label{markov_appx}

\begin{figure}[t]
\centering
\subfigure[Markov Chain of $q$-hit streak]{\label{fig:markov_coincidence}
\includegraphics[width=0.23\textwidth]
{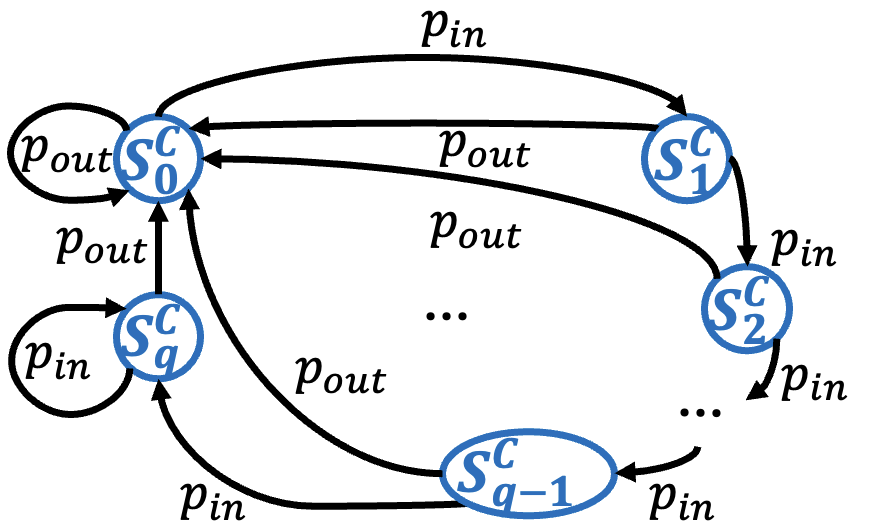}}
\hfill
%\vspace{-5pt}
\subfigure[Transition Matrix of $q$-hit streak]{\label{fig:transition_coincidence} \includegraphics[width=0.23\textwidth]{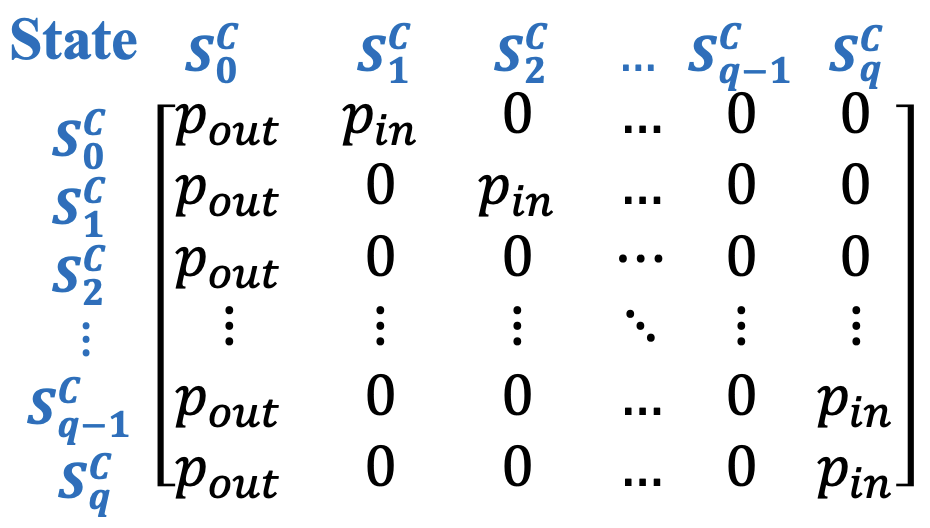}}
\caption{Markov Chain Model of $q$-hit streak}
\label{fig:markov_model_coincidence}
%\vspace{-10pt}
\end{figure}

In Figure~\ref{fig:markov_model_coincidence}, we depict the Markov Chain Model for the MorphType resilience evaluation.

\section{Effectiveness Against Topology Poisoning}\label{appendix:attack_eval}

We intentionally use a small 5-switch ring topology in Figure 3 (hosts on SW1/SW4 omitted for clarity) as a proof-of-concept to demonstrate detection correctness, which is topology-independent since detection operates per port. 
We simulate all four types (i.e., spoof, replay, relay, and flow entry-induced) of topology poisoning attacks on a 5-switch ring topology in Mininet. Using Scapy, we craft spoofed packets to launch spoof-based attacks and capture authentic LLDP packets to simulate replay-based attacks. We install malicious flow entries to simulate relay-based and flow entry-induced poisoning. All four attacks ultimately manipulate LLDP packet handling --- either the packet content (spoof, replay) or its forwarding (relay, flow entry-induced) --- so each generation procedure is short and self-contained. %\trent{Is there some easily explained guide for how the attacks are generated - or hanmany instances of attacks?  Not required, but if easy to say would be nice to have.}\mm{added}

\begin{figure}[t]
\centering{\includegraphics[width=0.49\textwidth]{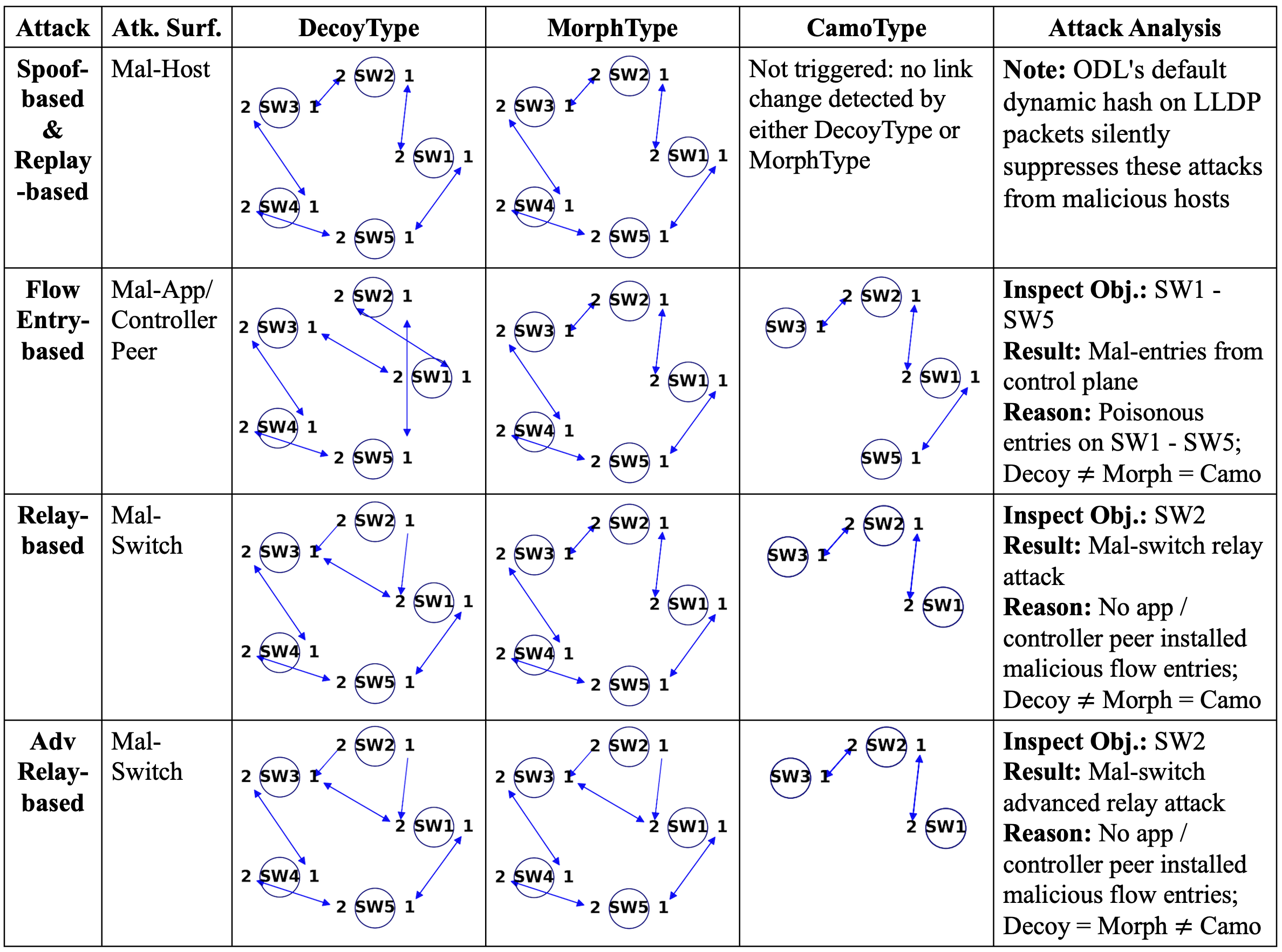}}
%\vspace{-20pt}
\caption{\proj{} with topology poisoning attacks}
\label{fig:defeat_all_topo_pois}
\vspace{-10pt}
\end{figure}

\proj{} prevents all four types of identifier-based topology poisoning attacks (\S\ref{sec:motivation}), as illustrated in Figure~\ref{fig:defeat_all_topo_pois}. Whenever DecoyType or the verified MorphType result reports a link change,
CamoType verification is triggered, and the resolution depends on which probes agree with CamoType (Algorithm 1, lines 14–27). Spoof- and replay-based attacks from malicious hosts are silently suppressed by ODL's dynamic LLDP hash, which \proj{} inherits without modification — DecoyType discovery in Figure~\ref{fig:defeat_all_topo_pois} reflects this protection. Flow entry-induced attacks from malicious applications or controller peers, and relay-based attacks from malicious switches, both mislead DecoyType discovery by manipulating LLDP forwarding via poisonous flow entries or in-port relays. In the above cases, MorphType reveals the true topology because its randomized headers ensure only table-miss rules match, blocking malicious flow entries and bypassing static-identifier exploitation. 
%\trent{Why do we need Camo then?  Sounds like you didn't try all the attacks.} \mm{thx for the catch, I changed it to all above attacks and the next paragraph discuss the advanced relay attack.}
CamoType serves as the final verification on every link change event — whether DecoyType and MorphType agree on a new link or disagree. The reported change is accepted only when ARP probes — which blend with legitimate data-plane traffic — confirm it.

\proj{} also defends against a more advanced threat in which a malicious switch intercepts all unknown packets arriving on a specific \kw{in-port}—an attack we refer to as an advanced relay attack, also shown in Figure~\ref{fig:defeat_all_topo_pois}. Malicious Switch 2 blindly relays both LLDP packets and all unknown packets arriving on Port 1 to Port 2, and vice versa, regardless of flow entries. This behavior causes both LLDP and MorphType discovery to be misled, fabricating a bidirectional link between Switch 1 and Switch 3. However, CamoType discovery remains unaffected, as ARP packets resemble normal data-plane traffic that attackers are unlikely to manipulate to maintain stealth.

\section{Ethernet Types of \proj{}}\label{appendix:ether-type}
Table~\ref{tab:ether-type} summarizes the Ethernet types suitable for \proj{} to support secure topology discovery.
\begin{table*}[t]
\small
\caption{Ethernet Types Chosen by \proj{}}
\label{tab:ether-type}
%\vspace{-0.5\baselineskip}
\footnotesize
\setlength{\tabcolsep}{2.25pt}
\centering 
\begin{tabular}[htbp]{p{1.2cm}p{1cm}p{1.2cm}p{14cm}}
%\begin{tabular}[htbp]{llll}
\toprule
\textbf{DiscType}  & \textbf{PktType} & \textbf{EthType} & \textbf{Description}\\
\midrule
Decoy & LLDP & 0x88cc  & Advertising identity and capability information to neighbor switches for discovery purposes \\

\cmidrule{2-4}
& BDDP & 0x8999  & Leveraging broadcast frames to advertise link information to discover hybrid links by collaborating with LLDP packets\\

\midrule
Morph & Rand & Unassigned & Randomly select an EtherType value, verify it remains unassigned in the IEEE Registration Authority~\cite{ieee_registration}, and ensure no conflict with other internal standards \\

\midrule
Camo & ARP & 0x0806 & Resolves IP-to-MAC address mappings; suitable for camouflaged verification and naturally supported by default flow entries.\\

\cmidrule{2-4}
& IPv4/ IPv6 & 0x0800/ 0x86DD & Ubiquitous data-plane types suitable for camouflage, but require careful construction to meet header standards and avoid flow rule conflicts.\\
\toprule
\end{tabular}
%\vspace{-10pt}
\end{table*}

\section{Trade-off Table}~\label{appendix:tradeoff}
Table~\ref{tab:tradeoff} shows the details of CPU, number of mapping entries, and detection latency measurements with varied Decoy and MorphType intervals.
\begin{table*}[t]
\centering
\caption{Trade-off between detection latency, CPU utilization, and mapping-table size}
\label{tab:tradeoff}
\footnotesize
\setlength{\tabcolsep}{4pt}
\begin{tabular}{r rrrr rrrr rrrr}
\hline
interval & \multicolumn{4}{c}{latency (s)} & \multicolumn{4}{c}{CPU (\%)} & \multicolumn{4}{c}{mapping entries} \\
\cline{2-5}\cline{6-9}\cline{10-13}
(s) & min & max & mean & med & min & max & mean & med & min & max & mean & med \\
\hline
3 & 1.6 & 9.9 & 4.5 & 4.3 & 4 & 411 & 105 & 97 & 240 & 315 & 260 & 252 \\
5 & 1.2 & 11.2 & 5.4 & 4.2 & 7 & 259 & 90 & 79 & 120 & 183 & 171 & 180 \\
7 & 2.6 & 26.8 & 8.1 & 6.7 & 8 & 483 & 87 & 80 & 120 & 676 & 160 & 120 \\
9 & 2.3 & 18.6 & 8.8 & 7.9 & 16 & 299 & 82 & 73 & 120 & 240 & 141 & 120 \\
12 & 2.8 & 20.5 & 9.8 & 9.7 & 13 & 351 & 80 & 70 & 60 & 120 & 70 & 60 \\
15 & 3.8 & 19.2 & 11.5 & 11.8 & 5 & 277 & 78 & 72 & 60 & 120 & 71 & 60 \\
\hline
\end{tabular}
\end{table*}

\section{Mapping-Table State}
\begin{figure}[t]
\centering
\includegraphics[width=\columnwidth]{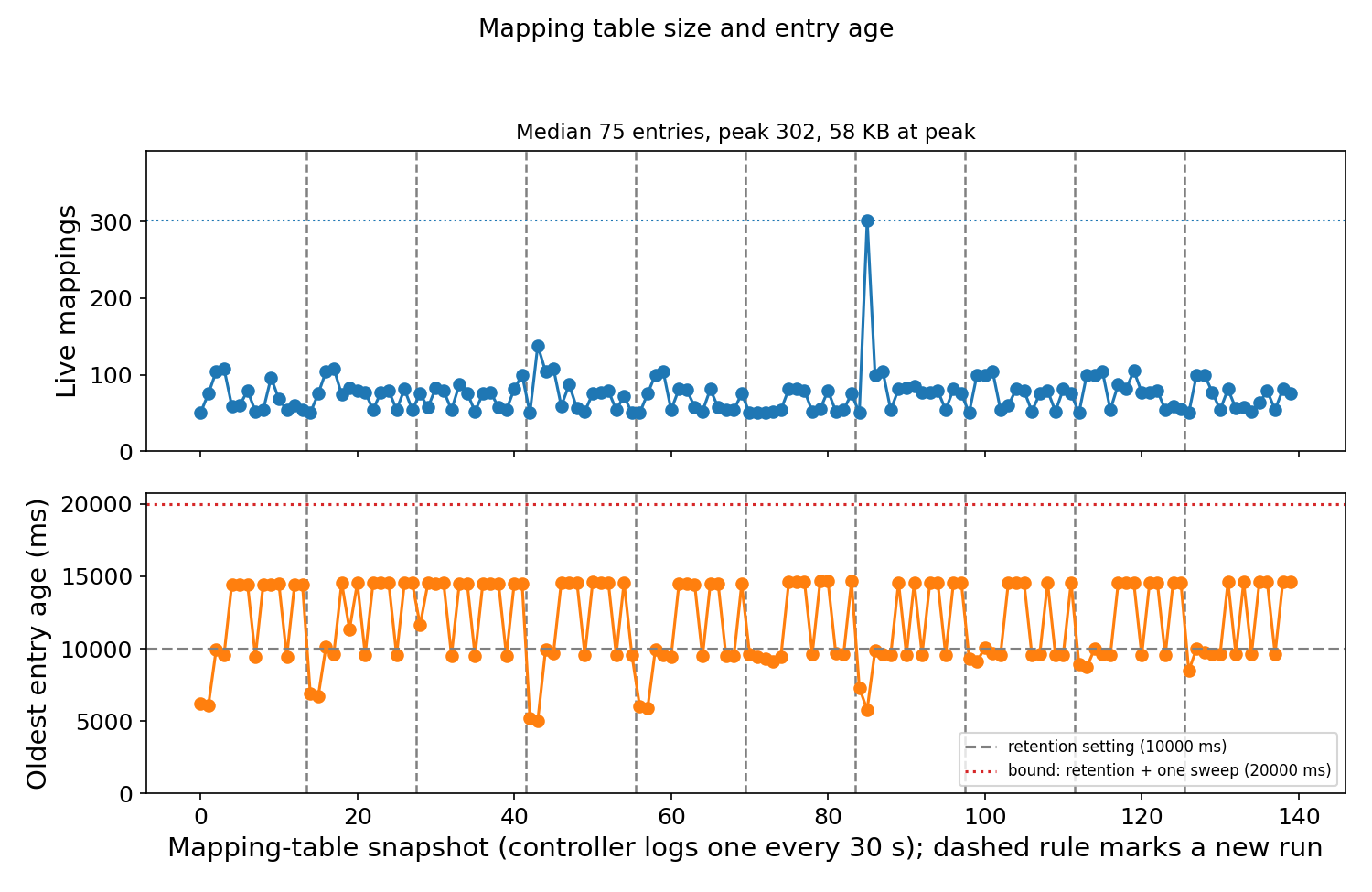}
\vspace{-20pt}
\caption{Mapping-Table Overhead Under Attacks}
\label{fig:entries_overhead}
\end{figure}

Figure~\ref{fig:entries_overhead} confirms this behavior over 90 minutes, six controller restarts, and 216 injected link-down and link-up changes. On the topology in Figure~\ref{fig:50nodes_topo}, the table contains a median of 79 entries and peaks at 83, requiring only 16 KB. The oldest-entry age also remains bounded
by the retention interval, confirming that entries are continuously retired rather than accumulated.

\section{Performance overhead under attacks}\label{appendix:overhead}
\begin{figure}[t]
\centering
\includegraphics[width=\columnwidth]{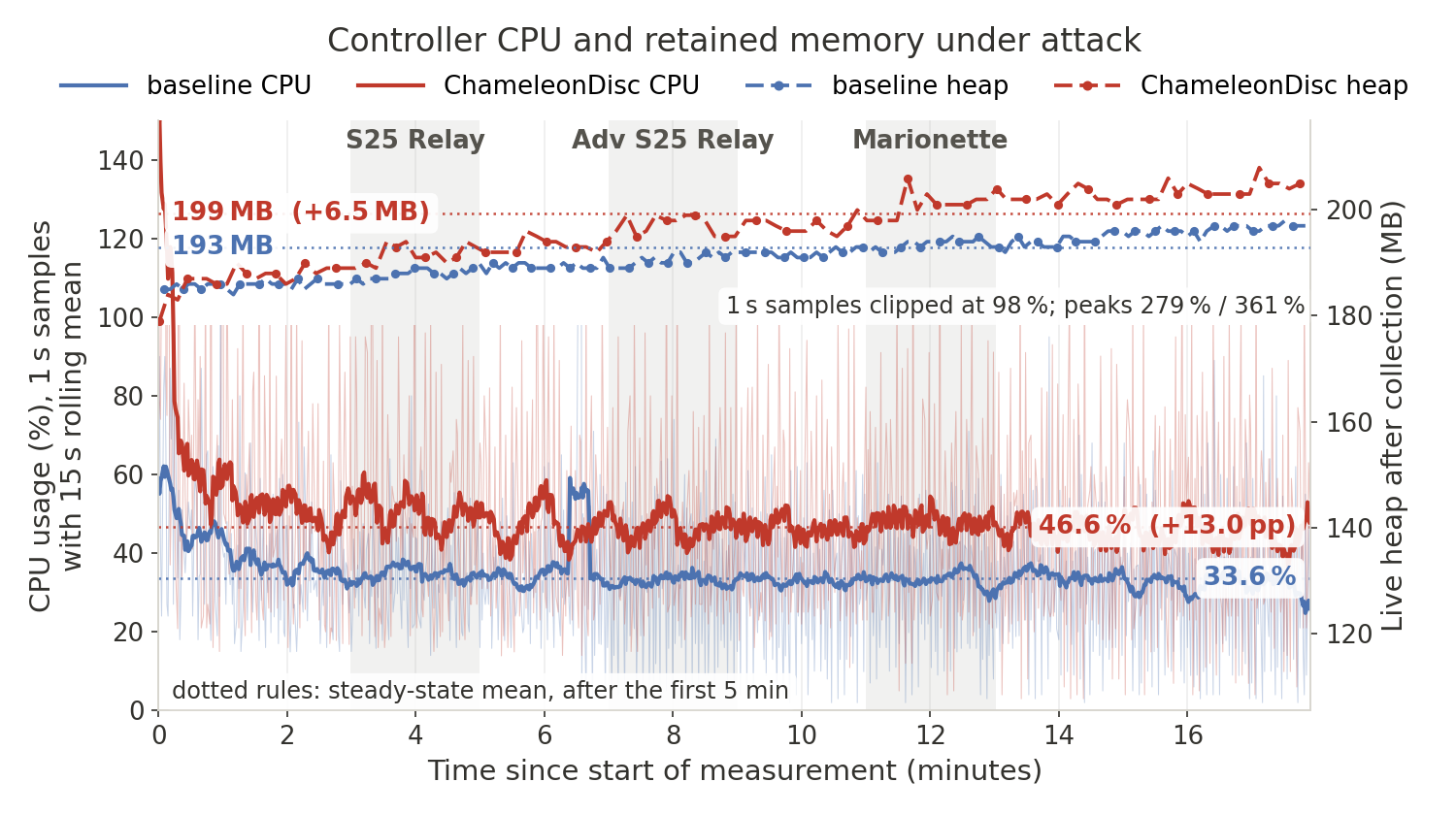}
\vspace{-20pt}
\caption{CPU and Heap Overhead Under Attacks}
\vspace{-15pt}
\label{fig:overhead-attack}
\end{figure}
We measure \proj{} against the \emph{baseline} implementation on the topology of Figure~\ref{fig:50nodes_topo}.
Each configuration runs for 17 minutes, with three attack/recovery pairs injected at fixed intervals (Figure~\ref{fig:overhead-attack}). CPU is sampled roughly once per second. Because JVM resident-set size does not reliably reflect live application memory, we measure post-GC live heap from the JVM garbage-collection log and validate it with a forced full collection at the end of each run.

As Figure~\ref{fig:overhead-attack} shows, attacks and recoveries do not visibly perturb CPU or retained memory. On average, \proj{} uses 46.6\% CPU versus 33.6\% for the baseline, an increase of 13.0 percentage points, although \proj{} approximately doubles
periodic discovery probes by transmitting both DecoyType and MorphType. The live-heap distributions substantially
overlap: \proj{} settles around 199\,MB versus 193\,MB for the baseline (179--208\,MB versus 184--198\,MB). 
The packet-to-\textit{link-src} mapping is \proj{}'s principal additional persistent state. Each entry is created when a probe is sent and removed when it returns or expires. Consequently, the table is bounded by probes in flight rather than controller uptime. The mapping table in this case has around 50-83 entries, which is around 16kB. The next section evaluates the scalability via CPU and mapping table size with various topologies.

Although ARP-based CamoType discovery relies on a static packet type and is theoretically targetable, manipulating such common data-plane traffic carries a high risk of exposure—contradicting the assumptions of our stealth-focused threat model. Manipulating IP-based CamoType discovery packets presents even greater risk since IP is the predominant packet type in network traffic, further discouraging adversarial interference. However, composing IP-based CamoType packets introduces considerable overhead because it requires dynamically collecting and analysing flow entries on each switch~\cite{alimohammadifar2018stealthy}. We therefore choose ARP-based CamoType.  

\section{Latency Distribution}\label{appendix:latency}
Figure~\ref{fig:latency_linkupdown} shows the latency distributions for link-up and link-down events, Figure~\ref{fig:latency_attack} shows the distribution under relay attacks, and Figure~\ref{fig:latency_networkfault} shows the distribution when monitored links experience 50\% packet loss.

\begin{figure}[t]
\centering
\includegraphics[width=\columnwidth]{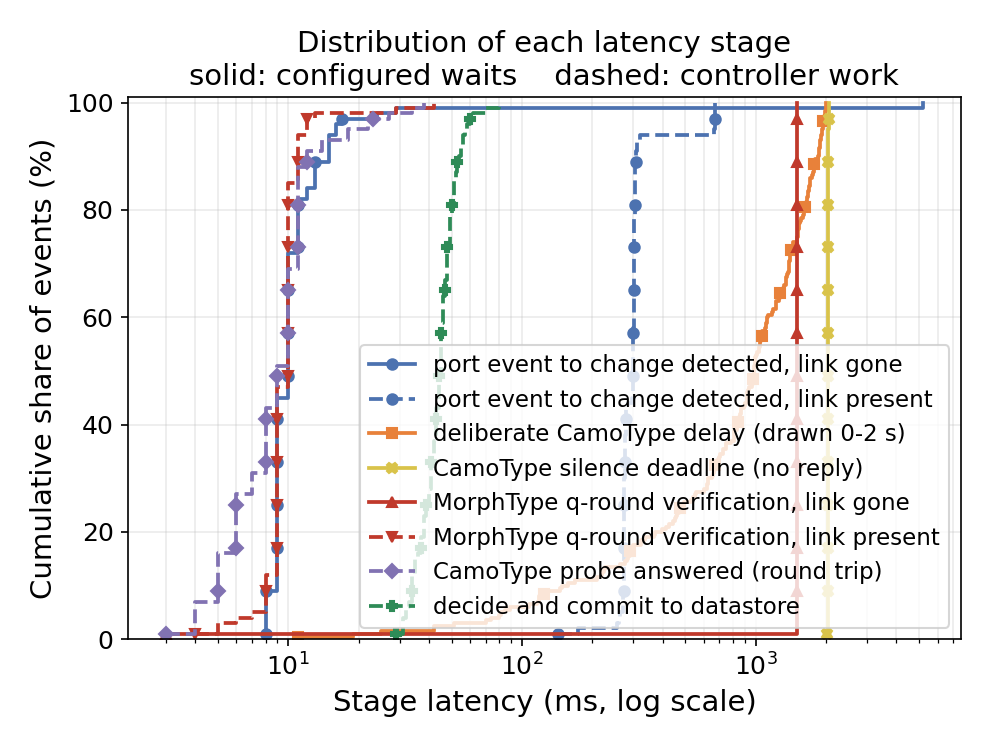}
\vspace{-20pt}
\caption{Latency Distribution While Link Up and Down}

\label{fig:latency_linkupdown}
\end{figure}

\begin{figure}[t]
\centering
\includegraphics[width=\columnwidth]{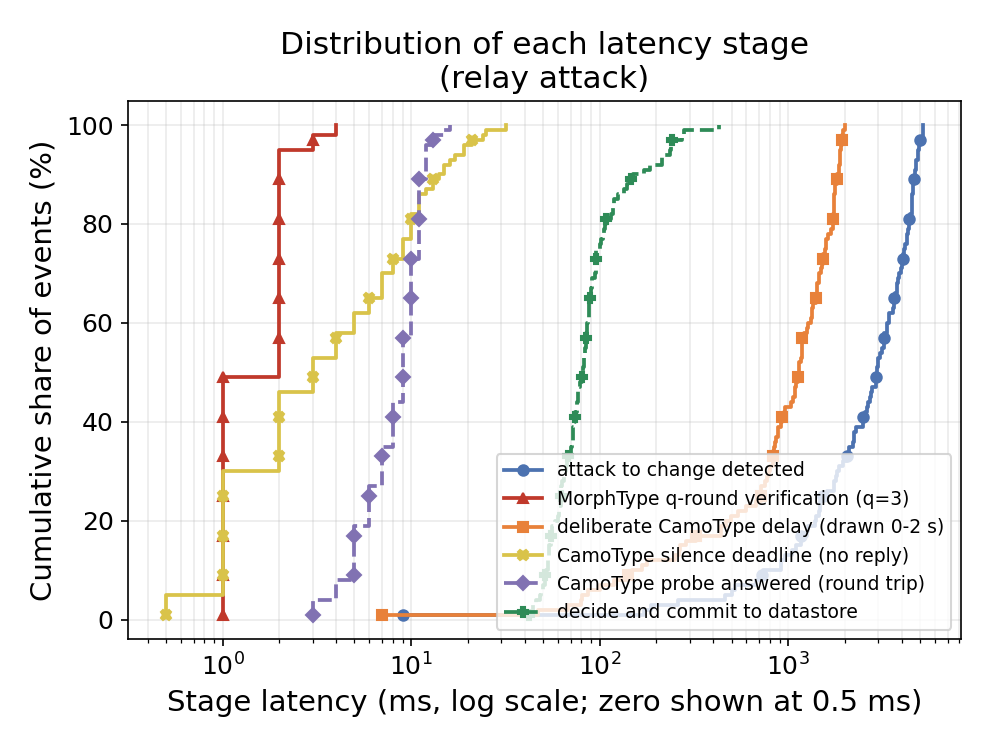}
\vspace{-20pt}
\caption{Latency Distribution While Relay Attack}

\label{fig:latency_attack}
\end{figure}

\begin{figure}[t]
\centering
\includegraphics[width=\columnwidth]{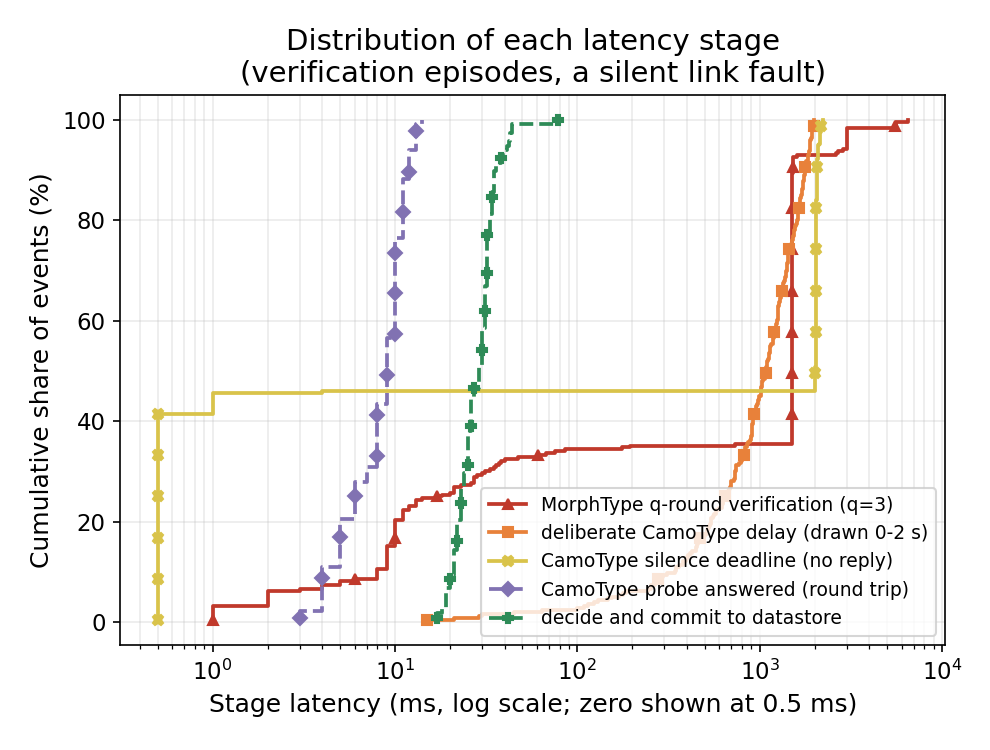}
\vspace{-20pt}
\caption{Latency Distribution While 50\% Packet Loss}

\label{fig:latency_networkfault}
\end{figure}

\section{Discussion}

\inlinedsection{Limitations} \proj{} targets common reactive forwarding setups that avoid coarse-grained flow rules matching \kw{in-port}. In purely proactive networks — rare in practice — MorphType becomes ineffective, requiring administrators to manually configure \kw{to-controller} entries and rely on CamoType (e.g., ARP, IP).%, which adds monitoring overhead and side-channel exposure risk. Our threat model also assumes stealthy adversaries that avoid manipulating normal data-plane traffic; aggressive learning-based adversaries that monitor long-term traffic and accept exposure to manipulate normal flows fall outside this scope and remain future work.

\inlinedsection{Compatibility with Hybrid SDNs} \proj{}'s core design --- randomized discovery packets validated centrally --- extends naturally to hybrid SDNs and other southbound interfaces (e.g., P4Runtime, BGP-based control). Adaptation primarily requires rethinking discovery initiation and \kw{packet-in} processing, which we leave to future work.

%\inlinedsection{Randomizing Discovery Pattern} An attacker may analyze packet intervals to identify discovery packets. \proj{} randomizes MorphType packet sizes (bimodal distribution) and CamoType verification delays (0 - 2 s) to frustrate timing-based classification. Randomizing transmission intervals for MorphType and DecoyType further is future work, at the cost of variable detection latency.

\inlinedsection{Open Source}  
Once accepted, we will release \proj{} as open source to foster collaboration and further improvements from the security and networking community. %\proj{} introduces a lightweight, probabilistically sound approach to SDN link discovery that integrates proactive prevention with real-time detection—rooted in a practical threat model. We plan to extend CHAMELEONDISC to hybrid and large-scale SDN environments, evaluating its adaptability under realistic workloads.

\label{appendix:discussion}
\section{Ethical Considerations}
This work focuses solely on defense mechanisms.
All referenced vulnerabilities are previously disclosed with public CVEs; no new vulnerabilities or human-subject data are involved.

\section*{Open Science}
We provide an anonymized artifact containing the source code, experiment scripts and configurations, raw evaluation data, and analysis scripts used in this paper.
\url{https://anonymous.4open.science/r/ChameleonDisc-B484/}
\cleardoublepage
%%%%%%%%%%%%%%%%%%%%%%%%%%%%%%%%%%%%%%%%%%%%%%%%%%%%%%%%%%%%%%%%%%%%%%%%%%%%%%%%
\end{document}